\newcommand{\R}{\mathbb{R}}
\newcommand{\C}{\mathbb{C}}
\newcommand{\Z}{\mathbb{Z}}
\newcommand{\N}{\mathbb{N}}
\newcommand{\sA}{\mathscr{A}}
\newcommand{\sM}{\mathscr{M}}
\newcommand{\sY}{\mathscr{Y}}
\newcommand{\sD}{\mathscr{D}}
\newcommand{\sG}{\mathscr{G}}
\newcommand{\psum}{\mathop{\sum\nolimits'}}
\newcommand{\pprod}{\mathop{\prod\nolimits'}}
\newcommand{\ignore}[1]{} 
\newtheorem{theorem}{Theorem}[section]
\newtheorem{lemma}[theorem]{Lemma}
\newtheorem{proposition}[theorem]{Proposition}
\newtheorem{definition}[theorem]{Definition}
\newtheorem{corollary}[theorem]{Corollary}
\title{An Optimal Family of Exponentially Accurate One-Bit Sigma-Delta
Quantization Schemes}
\author{Percy Deift\footnotemark[1] \and 
C.~Sinan G{\"u}nt{\"u}rk\footnotemark[1] \and  
Felix Krahmer\footnotemark[1] \footnotemark[2] }
\date{January 22, 2010}
\begin{document}
\renewcommand{\thefootnote}{\fnsymbol{footnote}}
\footnotetext[1]{
    Courant Institute of Mathematical Sciences, New York University, New York, NY, USA.
    }
 \footnotetext[2]{
   Hausdorff Center for Mathematics, Universit{\"a}t Bonn, Bonn, Germany.
   }
   
\renewcommand{\thefootnote}{\arabic{footnote}}
\maketitle

%\tableofcontents
\begin{abstract} 
Sigma-Delta modulation is a popular method for analog-to-digital conversion of bandlimited signals
that employs coarse quantization coupled with oversampling. The standard
mathematical model for the error analysis of the method measures the performance of a given 
scheme by the rate at which the associated reconstruction error decays 
as a function of the oversampling ratio $\lambda$. It was recently 
shown that exponential accuracy of the form $O(2^{-r\lambda})$ can be achieved by appropriate one-bit 
Sigma-Delta modulation schemes. By general information-entropy arguments $r$ must be less than $1$. The current best known value for $r$ is approximately $0.088$. 
The schemes that were designed to achieve this 
accuracy employ the ``greedy'' quantization rule coupled with feedback filters
that fall into a class we call ``minimally supported''. 
In this paper, we study the minimization problem that corresponds to optimizing the error decay rate for this class of feedback filters. We solve a relaxed version of this problem exactly and provide explicit asymptotics of the solutions. From these relaxed solutions, we find asymptotically 
optimal solutions of the original problem, 
which improve the best known exponential error decay 
rate to $r \approx 0.102$. 
Our method draws from the theory of orthogonal polynomials; in particular,
it relates the optimal filters to the zero sets of 
Chebyshev polynomials of the second kind.

\end{abstract}

\section{Introduction} 

Conventional Analog-to-Digital (A/D) conversion systems consist of two basic steps:
{\em sampling} and {\em quantization}. 
Sampling is the process of replacing the input function $(x(t))_{t \in \R}$ 
by a sequence of its sample values $(x(t_n))_{n \in \Z}$.  The sampling
instances $(t_n)$ are typically uniform, i.e., $t_n = n \tau$ for some 
$\tau > 0$. It is well known that this process incurs no loss of information if
the signal is bandlimited and $\tau$ is sufficiently small. 
More precisely, if $\mbox{supp}~\widehat x \subset [-\Omega,\Omega]$,
then it suffices to pick $\tau \leq \tau_\mathrm{crit} := \frac{1}{2\Omega}$, and the 
{\em sampling theorem} 
provides a recipe for perfect reconstruction via the formula
\begin{equation} \label{shannon}
x(t) = \tau \sum_{n \in \Z} x(n\tau) \varphi(t - n\tau),
\end{equation}
where $\varphi$ is any sufficiently localized
function such that 
\begin{equation}\label{admissible}
\widehat \varphi(\xi) = \left \{
\begin{array}{ll}
1, & |\xi| \leq \Omega, \\
0, & |\xi| \geq \frac{1}{2\tau},
\end{array}
\right.
\end{equation}
which we shall refer
to as the {\em admissibility} condition for $\varphi$.
Here, $\widehat x$ and $\widehat \varphi$ denote the Fourier transform of 
$x$ and $\varphi$, respectively. In this paper, we shall work with the 
following normalization of the Fourier transform on $\R$:
$$ \widehat f(\xi) := \int_{-\infty}^\infty f(t) e^{-2\pi i \xi t} \,\mathrm{d}t.$$
The value $\rho:=1/\tau$ is called the sampling rate, and
$\rho_\mathrm{crit} := 1/\tau_\mathrm{crit} = 2\Omega$ is called the critical
(or Nyquist) sampling rate. The {\em oversampling ratio} is defined as 
\begin{equation}\label{lambda}
\lambda:= \frac{\rho}{\rho_\mathrm{crit}}.
\end{equation}
We shall assume in the rest of the paper that the value of $\Omega$ is 
arbitrary but fixed.

The next step, quantization, is the process of discretization of the amplitude, 
which involves replacing each $x(t_n)$ by another
value $q_n$ suitably chosen from a fixed, typically finite 
set $\sA$ (the alphabet). The resulting
sequence $(q_n)$ forms the raw digital representation of the
analog signal $x$. 

The standard approach to recover an approximation $\tilde x$ to 
the original analog signal 
$x$ is to use the reconstruction formula 
(\ref{shannon}) with $x(n\tau)$ replaced by $q_n$, which
produces an error signal $e := x - \tilde x$ given by
\begin{equation} \label{err_sig}
e(t) = \tau \sum_{n \in \Z} \big (x(n\tau) - q_n \big) \varphi(t - n\tau).
\end{equation}
The quality of this approximation can then be measured by a variety of 
functional norms on the error signal. In this paper we will use the norm $\|e\|_{L^\infty}$ which is
standard and arguably the most meaningful.

Traditionally, A/D converters have been classified into two main families: {\em 
Nyquist-rate} converters ($\lambda \approx 1$)
and {\em oversampling} converters ($\lambda \gg 1$).
When $\tau = \tau_\mathrm{crit}$, the set of functions 
$\{\varphi(\cdot-n\tau) : n \in \Z\}$ forms an orthogonal system.
Therefore, a Nyquist-rate converter necessarily has to keep
$|x(n\tau)-q_n|$ small for each $n$ in order
to achieve small overall reconstruction error; this
requires that the alphabet
$\sA$ forms a fine net in the range of the signal. 
On the other hand, 
oversampling converters are not bound by this requirement because 
as $\tau$ decreases (i.e., $\lambda$ increases), the kernel of the operator
\begin{equation}\label{def_T}
T^\varphi_\tau:(c_n) \mapsto  \sum_{n \in \Z} c_n \varphi(\cdot - n\tau)
\end{equation}
gets ``bigger'' in a certain sense, and small error can be
achieved even with very coarse
alphabets $\sA$. The extreme case is a {\em one-bit quantization} scheme, which uses $\sA=\{-1,+1\}$. For the circuit engineer, 
coarse alphabets mean low-cost analog hardware because increasing the sampling 
rate is cheaper than refining the quantization.
For this reason, oversampling data converters, in particular, Sigma-Delta ($\Sigma\Delta$) modulators (see section~\ref{noisesh} below) have become more popular than
Nyquist-rate converters for low to medium-bandwidth signal classes, such as 
audio \cite{NST96}. 
On the other hand, the mathematics of oversampled A/D conversion is highly challenging as the selection problem of $(q_n)$ shifts from 
being local (i.e., the $q_n$'s are chosen independently for each $n$) to a more 
global one; a quantized assignment $q_n \in \sA$ should be
computed based not only on the current sample
value $x(t_n)$, but also taking into account sample values $x(t_k)$ and 
assignments $q_k$ in neighboring positions.
Many ``online'' quantization
applications, such as A/D conversion of audio signals, require causality,
i.e., only quantities that depend on
prior instances of time can be utilized.
Other applications, such as
digital halftoning, may not be strictly bound by the same kind of causality 
restrictions although it is still useful to process samples in
some preset order. In both situations, the amount of memory that can be
employed in the quantization algorithm is one of the
limiting factors determining the performance of the algorithm. 

This paper is concerned with the approximation theory of oversampled,
coarse quantization, in particular, one-bit quantization of bandlimited
functions. 
Despite the vast engineering literature on the subject (e.g., see \cite{NST96}), 
and a recent series
of more mathematically oriented papers (e.g., 
\cite{DaubDV03, yilmaz02, Gunt03,Gunt04,GunturkNguyen,BPY, BoPa07}),
the fundamental question of how to carry out optimal quantization remains open. 
After the pioneering work of Daubechies and DeVore on the
mathematical analysis and design of $\Sigma\Delta$ modulators \cite{DaubDV03}, 
more recent work showed that exponential
accuracy in the oversampling ratio $\lambda$ 
can be achieved by appropriate one-bit $\Sigma\Delta$ modulation schemes
\cite{Gunt03}.
The best achievable error decay rate for these schemes was $O(2^{-r\lambda})$ 
with $r \approx 0.076$ in \cite{Gunt03}. Later, with a modification,
this rate was improved
to $r \approx 0.088$ in \cite{Kra07}. It is known that any one-bit quantization 
scheme has to obey $r < 1$, whereas it is not known if this upper bound
is tight \cite{CalDaub02,Gunt03}. This paper improves the best achievable rate 
further to 
$r \approx 0.102$ by designing an optimal family of Sigma-Delta modulation 
schemes within
the class of so-called ``minimally supported'' recursion filters. 
These schemes were introduced in \cite{Gunt03} together with a number of accompanying open
problems. The {\bf main results} of this paper (see Theorems {\bf \ref{mainthm}, \ref{thmasop}, \ref{decayrate}})
are based on the solution of one of these problems, namely, the optimization of the minimally
supported recursion filters that are used in conjunction with the
{\em greedy} rule of quantization. 
One of our main results is that the supports of these optimal 
filters are given by suitably scaled 
zero sets of Chebyshev polynomials of the second kind, and we use this result
to derive our improved exponent for the error bound.

The paper is organized as follows. In Section \ref{background}, we review
the basic mathematical theory of $\Sigma\Delta$ modulation as well as
the constructions and methods of \cite{Gunt03} which will be relevant to this paper, such 
as the family of minimally supported recursion filters, the greedy
quantization rule, and fundamental aspects of the error analysis leading to exponentially decaying error bounds. As we explain, the discrete optimization problem introduced in \cite{Gunt03} plays a key role in the analysis. In Section \ref{prob_setup}, we introduce a relaxed version of the optimization problem, which is analytically tractable. This relaxed
problem is solved in Section \ref{solution_relax} and analyzed asymptotically in Section \ref{asymp} as the order of the reconstruction filter goes to infinity. Section \ref{asymp} also contains the construction of asymptotically optimal solutions to the discrete optimization problem, which yields the exponential error decay rate mentioned above. Finally, we extend our results to multi-level
quantization alphabets in Section \ref{multi-level-section}.
We also collect, separately in the Appendix,
the properties and identities for Chebyshev polynomials which are used in the proofs of our results.

%%%%%%%%%%%%%%%%%%%%%%%%%%%%%%%%%%%%%%%%%%%%%%%%%%%%%%%%%%%%%%%%%%%%%%%%

\section{Background on $\Sigma\Delta$ modulation}
\label{background}

\subsection{Noise shaping and feedback quantization\label{noisesh}}

$\Sigma\Delta$ modulation is a generic name for a family of recursive
quantization algorithms that utilize the concept of ``noise shaping''. (The origin of the terminology $\Sigma\Delta$, or alternatively $\Delta\Sigma$, goes back to the patent application of Inose et al \cite{IYM62} and refers to the presence of certain circuit components at the level of A/D circuit implementation;
see also \cite{NST96, SchTe04}.)
Let $(y_n)$ be a general sequence to be quantized (for example, $y_n = x(n\tau)$), 
$(q_n)$ denote the quantized representation of this sequence, and
$(\nu_n)$ be the quantization error sequence (i.e., the ``noise''), defined by 
%\begin{equation}
$\nu := y - q$.
%\end{equation}
Noise shaping is a quantization strategy whose objective is
to arrange for the quantization noise $\nu$ to
fall outside the frequency band of interest, which, in our case is  
the low-frequency band. Note that 
the effective error we are interested in is $e = T^\varphi_\tau(\nu)$,
hence one would like $\nu$ to be close to the kernel of $T^\varphi_\tau$.
It is useful to think of $T^\varphi_\tau(\nu)$ as a generalized convolution
of the sequence $\nu$ and the function $\varphi$ (sampled at the scale 
$\tau$). Let us introduce the notation
\begin{equation}\label{gen_conv}
\nu \circledast_\tau \varphi := T^\varphi_\tau(\nu).
\end{equation}
Note that $(a*b)\circledast_\tau \varphi
= a \circledast_\tau (b \circledast_\tau \varphi)$ and
$a \circledast_\tau (\varphi * \psi) = (a \circledast_\tau \varphi) * \psi$, 
etc., where $a$ and $b$ are sequences on $\Z$, $\varphi$ and $\psi$ are functions on
$\R$, and $*$ denotes the usual convolution operation (on $\Z$ and $\R$). 
By taking the Fourier transform of (\ref{def_T}) one sees that
the kernel of $T^\varphi_\tau$ consists of sequences $\nu$ that are 
spectrally disjoint from $\varphi$, i.e.,
``high-pass'' sequences, since $\varphi$ is a ``low-pass'' function,
as apparent from (\ref{admissible}). Thus, arranging for $\nu$
to be (close to) 
a high-pass sequence is the primary objective of $\Sigma\Delta$ 
modulation.

High-pass sequences have the property that their Fourier
transforms vanish at zero frequency (and possibly in a neighborhood). 
For a finite high-pass sequence $s$,
the Fourier transform $\widehat s(\xi) := \sum s_n 
e^{2\pi i n \xi}$ has a factor 
$(1-e^{2\pi i\xi})^m$ for some positive integer $m$,
which means that $s = \Delta^m w$ for some finite sequence
$w$. Here, $\Delta$ denotes the finite difference operator defined by
\begin{equation}\label{def_Delta}
(\Delta w)_n:=w_n - w_{n-1}.
\end{equation}
The quantization error sequence $\nu=y-q$, however, need not 
be finitely supported, and therefore 
$\widehat \nu$ need not be defined as a function (since $\nu$ is
bounded at best). Nevertheless, this spectral factorization
can be used to model more general high-pass sequences.
Indeed, a $\Sigma\Delta$ modulation scheme of order $m$
utilizes the difference equation
\begin{equation}\label{sig_del}
y-q = \Delta^m u
\end{equation}
to be satisfied for each input $y$ and its quantization $q$,
for an appropriate auxiliary sequence $u$ (called the state sequence). 
This explicit
factorization of the quantization error is useful if
$u$ is a bounded sequence, as will be explained in more detail
in the next subsection.

In practice, (\ref{sig_del}) is used as part of a quantization algorithm.
That is, given any sequence $(y_n)_{n \geq 0}$, its quantization 
$(q_n)_{n \geq 0}$ is generated
by a recursive algorithm that satisfies (\ref{sig_del}). This is 
achieved via an associated ``quantization rule'' 
\begin{equation}\label{Q_rule}
q_n = Q(u_{n-1},u_{n-2},\dots,y_n,y_{n-1},\dots),
\end{equation}
together with the ``update rule"
\begin{equation}\label{u_update}
u_n = \sum_{k=1}^m (-1)^{k-1} \binom{m}{k} u_{n-k} + y_n - q_n,
\end{equation}
which is a restatement of (\ref{sig_del}). Typically one employs
the initial conditions $u_n = 0$ for $n < 0$. 

In the electrical engineering literature, such a recursive procedure 
for quantization is
called ``feedback quantization" due to the role $q_n$ plays as a 
(nonlinear) feedback term via (\ref{Q_rule}) for the 
difference equation (\ref{sig_del}). Note that if $y$ and $q$
are given unrelated sequences and $u$ is determined from $y-q$ via
(\ref{sig_del}), 
then $u$ would typically be unbounded for $m \geq 1$.
Hence the role of the quantization rule $Q$ is to tie $q$ to $y$
in such a way as to control $u$.

A $\Sigma\Delta$ modulator may also arise from
a more general difference equation of the form 
\begin{equation}\label{sig_del_H}
y-q = H*v
\end{equation}
where $H$ is a {\em causal} sequence (i.e., $H_n = 0$ for $n < 0$)
in $\ell^1$ with $H_0 = 1$. 
If $H = \Delta^m g$ with $g \in \ell^1$, then any (bounded) solution $v$ of 
(\ref{sig_del_H}) gives rise to a (bounded) solution $u$ of (\ref{sig_del})
via $u = g*v$. Thus (\ref{sig_del_H}) can be rewritten in the {\em canonical}
form (\ref{sig_del}) by a change of variables. Nevertheless, there are 
significant advantages to working directly with representations of the form (\ref{sig_del_H}),
as explained in Section \ref{optimal} below.

\subsection{Basic error estimates\label{basicerror}}

Basic error analysis of $\Sigma\Delta$ modulation
only relies on the {\em boundedness} of solutions $u$ of 
(\ref{sig_del}) for given $y$ and $q$,
and not on the specifics of the quantization rule $Q$
that was employed. It is useful, however, to consider arbitrary
 quantizer maps $\sM : y \mapsto q$ that satisfy
(\ref{sig_del}) for some $m$ and $u$, where $u$ may or may not be
bounded. 
Note that if $u$ is a solution to (\ref{sig_del}) for a given triple
$(y,q,m)$, then $\tilde u := \Delta u$ is a solution for the triple
$(y,q,m{-}1)$. Hence a given map $\sM$ can be treated
as a $\Sigma\Delta$ modulator of different orders. 
Formally, we will refer to the pair $(\sM,m)$ as a $\Sigma\Delta$ modulator
of order $m$. 

As indicated above, in order for 
a $\Sigma\Delta$ modulator $(\sM,m)$ to be useful, there must be
a bounded solution $u$ to (\ref{sig_del}). (Note that, up to an additive
constant, there can be at most
one such bounded solution once $m > 0$.) Moreover, one would like this
to be the case for all input sequences $y$ in a given class
$\sY$, such as $\sY_\mu = \{y : ||y||_{\ell^\infty} \leq \mu\}$ for some $\mu>0$. 
In this case, we say that $(\sM,m)$ is {\em stable} 
for the input class $\sY$.
Clearly, if $(\sM,m)$ is stable for $\sY$, then $(\sM,m{-}1)$
is stable for $\sY$ as well. To any quantizer map
$\sM$ and a class of inputs $\sY$,
we assign its maximal order $m^*(\sM,\sY)$ via
\begin{equation}
m^*(\sM,\sY) := \sup \big \{ m : \forall y \in \sY,~~
\exists u \in \ell^\infty \mbox{ such that } y - \sM(y) = 
\Delta^m u \big \}.
\end{equation}
Note that both $0$ and $\infty$ are admissible values for 
$m^*(\sM,\sY)$. With this notation, 
$(\sM,m)$ is stable for the class $\sY$ if and only if 
$m \leq m^*(\sM,\sY)$. 

Stability is a crucial property. Indeed, 
it was shown in \cite{DaubDV03} that a stable $m$-th order scheme
with bounded solution $u$ results in the error bound
\begin{equation} \label{err1}
\|e \|_{L^\infty}  \leq  \|u\|_{\ell^\infty} \|\varphi^{(m)} \|_{L^1}\tau^m,
\end{equation}
where $\varphi^{(m)}$ denotes the $m$th order derivative of 
$\varphi$.
The proof of (\ref{err1}) employs repeated
summation by parts, giving rise to the commutation relation
\begin{equation}\label{commute}
(\Delta^m u) \circledast_\tau \varphi =  u \circledast_\tau 
(\Delta^m_\tau \varphi),
\end{equation}
where $\Delta_\tau$
is the finite difference operator at scale $\tau$ defined by
$(\Delta_\tau \varphi)(\cdot) := \varphi(\cdot)-\varphi(\cdot-\tau)$.
The left hand side of (\ref{commute}) is simply equal to the error signal
$e$ by definition.
On the other hand, the right hand side of this relation immediately leads  to the error bound (\ref{err1}). As $u$ is not uniquely determined by the equation
$ y-\sM(y)=\Delta^m u$, 
it is convenient to introduce the notation
\begin{equation}
U(\sM,m,y) := \inf \big \{
\|u\|_{\ell^\infty} : y - \sM(y) = \Delta^m u \big \},
\end{equation}
and for an input class $\sY$
\begin{equation}
U(\sM,m,\sY) := \sup_{y \in \sY} U(\sM,m,y).
\end{equation}
We are interested in applying (\ref{err1}) to sequences $y$ that arise as  samples $y_n = x\left(n\tau\right)$ of a bandlimited signal $x$ as above. To compare the error bounds for different values of $\tau$, one could consider the class $\sY^{(x)}=\left\{y=(y_n)_{n\in\Z}: y_n= x(n\tau) \text{ for some } \tau\right\}$ and work with the constant $U(\sM,m,\sY^{(x)})$. However, it is difficult to estimate $U(\sM,m,\sY^{(x)})$ in a way that accurately reflects the detailed nature of the signal $x$.
 Instead, we note that for $\|x\|_{L^\infty}\leq\mu$, one has $\sY^{(x)} \subset \sY_\mu$, where 
$\sY_\mu = \{y=(y_n)_{n\in\Z} : ||y||_{\ell^\infty} \leq \mu\}$ as defined above. 
This leads to the bound 
\begin{equation} \label{err1c}
\|e \|_{L^\infty}  \leq U(\sM,m,\sY_\mu) \|\varphi^{(m)} \|_{L^1}\tau^m.
\end{equation}

In (\ref{err1c}), the reconstruction kernel $\varphi$ is restricted by the $\tau$-dependent admissibility condition (\ref{admissible}). However, 
if a reconstruction kernel $\varphi_0$ is admissible 
in the sense of (\ref{admissible})
for $\tau = \tau_0
= 1/\rho_0$, then it is admissible for all $\tau < \tau_0$. 
This allows one to fix $\rho_0 = (1+\epsilon) \rho_\mathrm{crit}$
for some small $\epsilon > 0$, and set $\varphi = \varphi_0$. Now
Bernstein's inequality\footnote{If $\widehat f$ is supported in $[-A,A]$,
then $\|f'\|_{L^p} \leq 2\pi A \|f\|_{L^p}$ for $1\leq p\leq\infty$.}
 implies that $\|\varphi_0^{(m)}\|_{L^1} \leq 
(\pi \rho_0)^m \|\varphi_0\|_{L^1}$ since 
$\widehat \varphi_0$  is supported in $[-\frac{\rho_0}{2},\frac{\rho_0}{2}]$.
This estimate allows us to express the error bound naturally as a function of
the oversampling ratio $\lambda$ defined in (\ref{lambda}), as follows: 

Let $\epsilon$ and $\varphi_0$ be fixed as above, let $\sM$ be a given quantizer function, and let $m$ be a positive integer. Then for any given $\Omega$-bandlimited function $x$ with $\|x\|_{L^\infty}\leq\mu$, the quantization error $e=e_\lambda$, as expressed in (\ref{err_sig}) in terms of $\tau=\frac{1}{\lambda\rho_{crit} }$, can be bounded in terms of $\lambda$:
\begin{equation} \label{err2}
\|e_\lambda \|_{L^\infty} \leq U(\sM,m,\sY_\mu) \|\varphi_0\|_{L^1}  
\pi^m (1+\epsilon)^m \lambda^{-m}.
\end{equation}
As the constants are independent of $\lambda$, this yields an $O(\lambda^{-m})$  bound as a function of  $\lambda$. Note that 
any dependency of the error on the original bandwidth parameter $\Omega$
has now been effectively absorbed into the fixed constant 
$\|\varphi_0\|_{L^1}$. In fact, this quantity need not even 
depend on $\Omega$; a reconstruction kernel $\varphi^*_0$
can be designed once and for all corresponding to
$\Omega = 1$, and then employed to define $\varphi_0(t)
:= \Omega \varphi^*_0(\Omega t)$, which is admissible
and has the same $L^1$-norm as $\varphi^*_0$.

\subsection{Exponentially accurate $\Sigma\Delta$ modulation}
\label{optimal}

The $O(\lambda^{-m})$ error decay rate derived in the previous section is based on using a fixed $\Sigma\Delta$ modulator. It is possible, however, to improve the bounds by choosing the modulator adaptively as a function of $\lambda$. In order to obtain an error decay 
rate better than polynomial, one needs an infinite family 
$((\sM_m,m))_1^\infty$ of
stable $\Sigma\Delta$ modulation schemes from which
the optimal scheme $\sM_{m_{opt}}$ (i.e., one that yields the smallest
bound in (\ref{err2}))
is selected as a function of $\lambda$.
This point of view was first pursued systematically 
in \cite{DaubDV03}. Finding a stable 
$\Sigma\Delta$ modulator is in general a non-trivial matter as $m$
increases, and especially so if the alphabet $\sA$ is a small set.
The extreme case is one-bit $\Sigma\Delta$ modulation, i.e.,
when $\mbox{card}(\sA) = 2$. 
The first infinite family of arbitrary-order, stable one-bit $\Sigma\Delta$ 
modulators was also constructed in \cite{DaubDV03}.
In the one-bit case, one may set
$\sA = \{-1,+1\}$ to normalize the amplitude, and choose 
$\mu \leq 1$ when defining the input class $\sY = \sY_\mu$. 
The optimal order $m_{opt}$
and the size of the resulting bound on $\|e_\lambda\|_{L^\infty}$ 
 depend on the constants $U(\sM_m,m,\sY_\mu)$. 

There are a priori lower bounds on $U(\sM,m,\sY_\mu)$ for any $0<\mu\leq 1$ and any quantizer $\sM$. Indeed, as shown in 
\cite{Gunt03}, one obtains a super-exponential lower bound on $U(\sM,m,\sY_\mu)$  by considering the
average metric entropy of the space of bandlimited functions, as follows.
Define the quantity
\begin{equation}
U_m(\sY) := \inf_\sM ~U(\sM,m,\sY)
%\sup_{y \in \sY} ~ \inf_{q \in \{-1,1\}^\Z}
%\big \{ \|u\|_{\ell^\infty} : \Delta^m u = y - q \big \}.
\end{equation}
and let
\begin{equation}
\mathscr{X}_\mu := \{x : \mathrm{supp}~\widehat x \in [-1/2,1/2],~
\|x\|_{L^\infty} \leq \mu \}.
\end{equation}
Then, as shown in \cite{DDGV06,Gunt03}, one has for any one-bit quantizer
\begin{equation}\label{exp_lower}
\sup \{\|e_\lambda\|_{L^\infty} : x \in \mathscr{X}_\mu \} \gtrsim_\mu 2^{-\lambda},
\end{equation}
which yields, when used with (\ref{err2}), 
the result that 
$U_m(\sY_\mu) \gtrsim_\mu (mc)^m$ for some absolute 
constant $c > 0$
\cite{Gunt03}. Here by $A\gtrsim_s B$, we mean that $A\geq C B$ for some constant $C$ that may depend on $s$, but not on any other input variables of $A$ and $B$.

For the family of $\Sigma\Delta$ modulators constructed in \cite{DaubDV03}, which we will denote by $\left(\sD_m\right)_{m=1}^\infty$, 
the best upper bounds on $U(\sD_m,m,\sY_\mu)$ are of order
$\exp(cm^2)$, resulting in the order-optimized error bound
$ \|e_\lambda\|_{L^\infty} \lesssim \exp(-c (\log \lambda)^2)$, which is substantially larger
than the exponentially small lower bound of (\ref{exp_lower}). 
The first construction that led to 
exponentially accurate $\Sigma\Delta$ modulation
was given later in \cite{Gunt03}. The associated modulators, here denoted by $\left(\sG_m\right)_{m=1}^\infty$, satisfy the 
bound
\begin{equation}\label{bound_U}
U(\sG_m,m,\sY_\mu) \lesssim (ma)^m,
\end{equation}
for $a = a(\mu) > 0$. 
Substituting (\ref{bound_U}) into (\ref{err2}) and using the elementary inequality
\begin{equation}\label{optim_ineq}
\min_{m} ~ m^m \alpha^{-m} \lesssim e^{-\alpha/e} 
\end{equation}
with $\alpha = \lambda/(a\pi(1+\epsilon))$, one obtains the order-optimized error bound
\begin{equation}\label{exp_upper}
\sup \{\|e_\lambda\|_{L^\infty} : x \in \mathscr{X}_\mu \} \lesssim 2^{-r\lambda},
\end{equation}
where $r = r(\mu) = (a \pi e  (1+\epsilon) \log 2)^{-1}$. On the other hand, the smallest achievable value of $a$ is shown to be ${6}/{e}$, which corresponds to the largest achievable value of $r=r_\mathrm{max}\left(\left(\sG_m\right)_{m=1}^\infty\right)\approx (6 \pi \log 2)^{-1} \approx 0.076$. Observe from (\ref{exp_lower}) that it is impossible to achieve an exponent $r>1$.

In \cite{Gunt03}, the $\sG_m$ were given in the form (\ref{sig_del_H}) where the causal sequences $H,g\in \ell^1$ with $H_0 = g_0 = 1$ depend on $m$, and are related via 
\begin{equation}\label{factor}
H = \Delta^m g.
\end{equation}
Define 
$h := \delta^{(0)} - H$, where 
$\delta^{(0)}$ denotes the Kronecker delta sequence supported at $0$. Then
(\ref{sig_del_H}) can be implemented recursively as
\begin{equation}\label{rec_eq}
v_n = (h*v)_n + y_n - q_n.
\end{equation}
If $q$ is chosen so that 
the resulting $v$ is bounded, i.e., if this new scheme is stable, then
$u := g*v$ is a bounded solution of (\ref{sig_del}), and
\begin{equation} \label{uvg}
\| u \|_{\ell^\infty} \leq \|g \|_{\ell^1} \|v \|_{\ell^\infty}.
\end{equation}

The significance of the more general form  (\ref{sig_del_H}) giving rise to (\ref{rec_eq}) is the following:
If 
\begin{equation} \label{hmucond}
 ||h||_1+||y||_\infty\leq 2,
\end{equation}
then the {\em greedy} quantization rule
\begin{equation}\label{greedy_Q}
 q_n := \text{sign}\left((h\ast v)_n + y_n \right).
\end{equation}
leads to a solution $v$ with $||v||_\infty \leq 1$, provided the 
initial conditions satisfy this bound. However, for the canonical form (\ref{sig_del}), the condition (\ref{hmucond}) clearly fails for all $m>1$.

For $\|y\|_\infty \leq \mu\leq 1$, a filter $h$ will satisfy (\ref{hmucond}) and hence lead to a stable scheme when $\|h\|_1 \leq 2-\mu$. Note that (\ref{factor}) together with the fact that $g\in\ell^1$ implies that $\sum h_i=1$ and hence  $\|h\|_1\geq 1$.

In view of the preceding considerations, we are led, for each $m$ to the following minimization problem:
\begin{equation}
 \text{Minimize } \|g\|_1\text{ subject to } \delta^{(0)}-h=\Delta^m g,\ \|h\|_1\leq 2-\mu. \label{minprobgen}
\end{equation}
It was shown in \cite{Gunt03} that if a sequence of filters ${h}^{(m)}$ satisfy the {\em feasibility conditions} in (\ref{minprobgen}) for the corresponding $m\in\N$, the diameter of the support set of ${h}^{(m)}$ must grow at least quadratically in $m$. 

The minimization problem (\ref{minprobgen}) was not solved in \cite{Gunt03}; rather the author introduced a class of feasible filters $h = h^{(m)}$ which were effective in the sense that they lead to an exponential error bound with rate constant $r$ as above. These 
filters $h^{(m)}$ are sparse, i.e., they contain only a few non-zero entries. Indeed, each $h^{(m)}$  has exactly $m$ non-zero entries, which can be shown to be the minimal support size for which $h^{(m)}$ can satisfy the feasibility conditions. 
We shall call such filters {\em minimally supported}. Note that if $h$ has finite support and $\|g\|_1<\infty$, then $g$ has finite support. We make the following formal definition:
\begin{definition}
We say that a filter $h=\delta^{(0)}-\Delta^m g$, for a finitely supported $g$, has {\it{minimal support}} if $|\mbox{supp } h|=m$.
\end{definition}
The goal of this paper is to
find optimal filters within the class of filters with minimal support.

\subsection{Filters with minimal support\label{minsupp}}

As the filter $\delta^{(0)}-h$ arises as the $m$-th order finite difference of the vector $g$, its entries have to satisfy $m$ moment conditions. This implies that the support size of $h$ is at least $m$, as advertised above.

For filters $h$ with minimal support
\begin{equation}
 h=\sum\limits_{j=1}^{m} d_j \delta^{({n_j})},
\end{equation}
the moment conditions lead to explicit formulae for the entries $d_j$ in terms of the support $\{n_j\}_{j=1}^m$ of $h$, where $1 \leq n_1<n_2<\dots<n_m$ \cite{Gunt03}. Here the condition that $n_1\geq 1$ follows from the strict causality of $h$.

Indeed, one finds
\begin{equation}
d_j={\pprod\limits^m_{i=1}}\frac{n_i}{n_i-n_j}.
\end{equation}

Here the notation {\footnotesize $\pprod$}, and analogously {\footnotesize $\psum$}, indicates that the singular terms are excluded from the product, or the sum respectively. By definition, if $m=1$, one has $d_1=1$.

The condition $\Vert h\Vert_1\leq 2-\mu$ then takes the form
\begin{equation}\label{L1}
 \sum\limits_{j=1}^{m} {\pprod\limits_{i=1}^{m}}\frac{n_i}{\vert n_i-n_j\vert}\leq 
2-\mu .
\end{equation}
Furthermore, explicit computations lead to the identity
\begin{equation} \label{geta}
 \Vert g\Vert_1 = \frac{\prod_{j=1}^{m}n_j}{m!}.
\end{equation}
In this notation, minimization problem (\ref{minprobgen}) takes the form

\begin{equation}
 \text{Minimize }\frac{\prod_{j=1}^{m}n_j}{m!}\text{ over }\{{\bf n}=(n_1,\dots,n_m)\in \mathbb{N}^m: \text{(\ref{L1}) holds and } 1\leq n_1<\dots <n_m\} \label{minprobni}
\end{equation}
%over $\{{\bf n}=(n_1,\dots,n_m)\in \mathbb{N}^m | 1\leq n_1<n_2<\dots n_m\}$ subject to (\ref{L1}). 

For $\mu=1$, Problem (\ref{minprobni}) has a solution only for $m=1$, and we find $h=\delta^{(1)}$, but for $\mu<1$, the problem has a nontrivial solution for all $m$. That is, we can find $n_j$, $j=1,\dots, m$, that satisfy (\ref{L1}). 
In particular, for $n_j(\sigma)=1+\sigma (j-1)$, one easily sees that
\begin{eqnarray}
 \lim\limits_{\sigma \rightarrow\infty} \sum\limits_{j=1}^{m} {\pprod\limits_{i=1}^m}\frac{{n_i(\sigma)}}{\vert {n_i(\sigma)}-{n_j(\sigma)}\vert} 
&=&  1.
\end{eqnarray}
So for every $\mu<1$, ${\bf n}(\sigma)$ satisfies constraint (\ref{L1}) for all $\sigma$ large enough.

Furthermore, any minimizer ${\bf n}$ of problem (\ref{minprobni}) must satisfy $n_1=1$. Indeed, otherwise $n_j>1$ for all $j$ and we can define $\tilde{\bf n}$ by $\tilde{n}_j=n_j-1\geq 1$ for all $j=1,\dots, m$. Calculate
\begin{equation}
  \sum\limits_{j=1}^{m} {\pprod\limits_{i=1}^{m}}\frac{\tilde{n}_i}{\vert \tilde{n}_i-\tilde{n}_j\vert}=
 \sum\limits_{j=1}^{m} {\pprod\limits_{i=1}^{m}}\frac{n_i-1}{\vert n_i-n_j\vert}<
\sum\limits_{j=1}^{m} {\pprod\limits_{i=1}^{m}}\frac{n_i}{\vert n_i-n_j\vert}\leq 
2-\mu
\end{equation}
 and 
\begin{equation}
 \frac{\prod_{j=1}^{m}\tilde{n}_j}{m!} < \frac{\prod_{j=1}^{m}n_j}{m!}.
\end{equation}
So ${\bf n}$ cannot be a minimizer.

Hence, we can fix $n_1\equiv 1$, which reduces problem (\ref{minprobni}) to minimizing
\begin{equation} \label{minimint}
 \eta({\bf{n}}):={\prod_{j=2}\limits^{m}n_j}
\end{equation}
over the set $\{{\bf n}=(n_2,\dots,n_m)\in \mathbb{N}^{m-1} | 1<n_2<\dots <n_m\}$ under the constraint
\begin{equation}\label{L1int}
 \sum\limits_{j=1}^{m} {\pprod\limits_{i=1}^{m}}\frac{n_i}{\vert n_i-n_j\vert}\leq 
\gamma ,
\end{equation}
where again $n_1\equiv 1$.
The factor $m!$ in the denominator has been absorbed into the definition of $\eta$ to simplify the notation. Furthermore, we have set $\gamma = 2-\mu$, as the considerations that follow make sense for arbitrary $\gamma>1$ and not only for $\gamma\leq 2$.

{\em Notational Remark:} All quantities in the derivations below depend on $m$. We will suppress this dependence unless it is relevant in a particular argument.

%%%%%%%%%%%%%%%%%%%%%%%%%%%%%%%%%%%%%%%%%%%%%%%%%%%%%%%%%%%%%%%%%

\section{The relaxed minimization problem for optimal filters}
\label{prob_setup}

The variables $n_j$ correspond to the positions of the nonzero entries in the vector $h$, so they are constrained to positive integer values. We will first consider the {\it relaxed minimization problem} without this constraint; this will eventually enable us to draw conclusions about the original problem.
Thus the variables $n_j\in\N$ will be replaced by relaxed variables $x_j\in\R^+$. Furthermore, it turns out to be convenient to replace the index set $\{1,\dots, m\}$ by $\{0,\dots,{m-1}\}$. 

The relaxed minimization problem is specified as follows: Minimize
\begin{equation} \label{minim1}
 \eta({\bf{x}}):={\prod_{j=1}\limits^{m-1}x_j}
\end{equation}

over the set $D=\left\{{\bf x}\in \mathbb{R}^{m-1} | 1<x_1<x_2<\dots<x_{m-1}\right\}$ under the constraint

\begin{eqnarray}
 f({\bf{x}}):=
\sum\limits_{j=0}^{m-1} \pprod\limits_{ i=0}^{m-1}\frac{x_i}{\vert x_i-x_j\vert}
&\leq& \gamma\label{L1b},
%\\
%1 < x_1 < x_2<\dots &<& x_{m-1} \label{greaterthan1}
\end{eqnarray}
where $x_0\equiv 1$.

Observe that  $f$ is defined and smooth in the open domain $D$.
The following monotonicity property for
$f$ is important in making inferences from the relaxed to the discrete minimization problem. Let ${\bf r}({\bf x})$ be given by $r_j({\bf x})=\frac{x_j}{x_{j-1}}$, $j=1,\dots,m-1$, and set $F({\bf r}) = f({\bf x})$ for ${\bf x}$ such that ${\bf r}={\bf r}({\bf x})$.
\begin{lemma}\label{monotonicity}
The function $F({\bf r})$ is strictly decreasing in each variable $r_j$.
\end{lemma}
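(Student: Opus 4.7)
My plan is to reduce strict monotonicity in $r_j$ to strict monotonicity in a single multiplicative scale parameter, and then verify the strict decrease summand-by-summand in the defining expression for $f$. Holding all $r_\ell$ with $\ell\neq j$ fixed and replacing $r_j$ by $s\,r_j$ produces new coordinates $y_i=x_i$ for $i<j$ and $y_i=s\,x_i$ for $i\geq j$, since $x_k=\prod_{\ell=1}^k r_\ell$. Thus it suffices to show that $\Phi(s):=f(\mathbf{y}(s))$ is strictly decreasing in $s$ on the admissible range (i.e., where $\mathbf{y}(s)\in D$).

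Write $\Phi(s)=\sum_{k=0}^{m-1} T_k(s)$ and split according to whether $k<j$ or $k\geq j$. In the first case only the factors with $i\geq j$ depend on $s$, giving
\[
 T_k(s)\;=\;A_k \prod_{i\geq j}\frac{s\,x_i}{s\,x_i-x_k},\qquad A_k:=\pprod_{i<j,\,i\neq k}\frac{x_i}{|x_i-x_k|}.
\]
In the second case ($k\geq j$), the factors with $i\geq j$, $i\neq k$ are $s$-independent, since both $y_i=s\,x_i$ and $|y_i-y_k|=s|x_i-x_k|$ carry the same common factor that cancels; all the $s$-dependence sits in the factors with $i<j$:
\[
 T_k(s)\;=\;B_k \prod_{i<j}\frac{x_i}{s\,x_k-x_i},\qquad B_k:=\pprod_{i\geq j,\,i\neq k}\frac{x_i}{|x_i-x_k|}.
\]
Admissibility of $\mathbf{y}(s)$ forces $s\,x_i>x_k$ whenever $k<j\leq i$ and $s\,x_k>x_i$ whenever $i<j\leq k$, so every $T_k(s)$ is positive and logarithmic differentiation is legitimate.

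Taking logarithmic derivatives (in the first case, simplifying $\frac{1}{s}-\frac{x_i}{sx_i-x_k}=\frac{-x_k}{s(sx_i-x_k)}$) gives
\[
 \frac{T_k'(s)}{T_k(s)}\;=\;-\frac{x_k}{s}\sum_{i\geq j}\frac{1}{s\,x_i-x_k}\;<\;0\qquad(k<j),
\]
\[
 \frac{T_k'(s)}{T_k(s)}\;=\;-x_k\sum_{i<j}\frac{1}{s\,x_k-x_i}\;<\;0\qquad(k\geq j).
\]
Hence $T_k'(s)<0$ for every $k$, so $\Phi'(s)<0$ throughout the admissible range, yielding strict monotonicity of $F$ in $r_j$.

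The only delicate point is the bookkeeping of which factors carry $s$-dependence in each case; once the two reduced forms for $T_k(s)$ are correctly identified, the rest is elementary calculus. A pleasant byproduct is that the argument establishes strict decrease of each individual summand, so the monotonicity of $F$ holds not just in total but term-wise in the sum defining $f$.
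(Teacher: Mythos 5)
Your proof is correct. It proceeds by a genuinely different (though related) route from the paper's. The paper simply substitutes the $r$-coordinates directly into the defining sum and observes that
\begin{equation}
F(\mathbf r)=\sum_{j=0}^{m-1}\prod_{i<j}\frac{1}{r_{i+1}r_{i+2}\cdots r_j-1}\,\prod_{i>j}\frac{1}{1-\frac{1}{r_{j+1}r_{j+2}\cdots r_i}},
\end{equation}
whence each factor is a positive, strictly decreasing function of every $r_\ell$ it contains, and monotonicity of the whole sum is immediate, with no calculus required. You instead freeze all $r_\ell$, $\ell\neq j$, introduce a scaling parameter $s$ (so $y_i=x_i$ for $i<j$ and $y_i=sx_i$ for $i\geq j$), and show $\Phi'(s)<0$ by logarithmic differentiation of each term. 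The underlying observation driving both proofs is the same --- each factor $\frac{x_i}{|x_i-x_k|}$ depends on $\mathbf r$ only through the ratio $x_{\max(i,k)}/x_{\min(i,k)}$, which is a product of $r$'s --- but the paper's version avoids the bookkeeping of which factors carry $s$-dependence and the derivative computation, while yours has the pedagogical advantage of making the term-by-term strict decrease entirely explicit. Your argument is sound: the factorization of $T_k(s)$ in the two cases $k<j$ and $k\geq j$ is correct, the admissibility condition $s>x_{j-1}/x_j$ guarantees positivity of all factors, the logarithmic derivatives are computed correctly, and both sums $\sum_{i\geq j}$ and $\sum_{i<j}$ are nonempty since $1\leq j\leq m-1$, so the inequalities are indeed strict.
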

\begin{proof}
A simple calculation shows that
\begin{equation}
 F({\bf r}) = \sum\limits_{j=0}^{m-1} \pprod\limits_{ i=0}^{m-1}\frac{x_i}{\vert x_i-x_j\vert} 
=\sum\limits_{j=0}^{m-1} \prod\limits_{i<j} \frac{1}{r_{i+1} r_{i+2} \cdots r_{j}-1}
\prod\limits_{i>j} \frac{1}{1-\frac{1}{r_{j+1} r_{j+2} \cdots r_{i}}},
\end{equation}
from which the monotonicity is immediate.
\end{proof}
\begin{definition}\label{subord}
If ${\bf x}, {\bf y}\in D$ and $1\leq\frac{y_1}{x_1}\leq\dots\leq \frac{y_{m{-}1}}{x_{m{-}1}}$, we say that ${\bf y}$ is {\em subordinate} to ${\bf x}$. 
\end{definition}
Clearly, ${\bf y}$ is subordinate to ${\bf x}$ if and only if $r_j({\bf x})\leq r_j({\bf y})$ for $j=1,\dots, m-1$, so Lemma~\ref{monotonicity} is equivalent to the following:
\begin{corollary}\label{monotonicity2}
 If ${\bf y}$ is subordinate to ${\bf x}$ and ${\bf x}\neq{\bf y}$, then $f({\bf y})<f({\bf x})$.
\end{corollary}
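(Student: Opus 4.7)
The plan is straightforward: this corollary is essentially a translation of Lemma~\ref{monotonicity} into the language of subordination, so I would just make the translation explicit and note where strictness comes in.

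First, I would unpack the hypotheses. By the definition of subordination, the inequalities $1\leq \frac{y_1}{x_1}\leq\cdots\leq\frac{y_{m-1}}{x_{m-1}}$ are equivalent to $r_j({\bf x})\leq r_j({\bf y})$ for $j=1,\dots,m-1$; this is the remark immediately preceding Corollary~\ref{monotonicity2} and follows from a short telescoping calculation of $r_j({\bf y})/r_j({\bf x})$. Since $x_0=y_0=1$, the coordinates $x_j$ and $y_j$ are completely determined by the ratios $r_k$ via $x_j = \prod_{k=1}^{j} r_k({\bf x})$, $y_j=\prod_{k=1}^{j} r_k({\bf y})$, so ${\bf x}\neq{\bf y}$ forces $r_{j_0}({\bf x})\neq r_{j_0}({\bf y})$ for some index $j_0$, and combined with subordination this yields $r_{j_0}({\bf x})< r_{j_0}({\bf y})$.

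Second, I would apply Lemma~\ref{monotonicity} coordinate by coordinate. Interpolating between ${\bf r}({\bf x})$ and ${\bf r}({\bf y})$ by increasing one coordinate at a time, the strict monotonicity of $F$ in the $j_0$-th variable gives a strict decrease at that step, while the (non-strict) monotonicity at the other steps gives non-increase. Summing these effects yields $F({\bf r}({\bf y}))<F({\bf r}({\bf x}))$, i.e., $f({\bf y})<f({\bf x})$, which is exactly the claim.

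There is no real obstacle here; the only tiny point worth verifying is that the interpolating ratio vectors stay in the image of $D$ under ${\bf r}$, but since $D$ is described precisely by the conditions $r_j>1$ for all $j$ (together with $x_0=1$), and subordination guarantees $r_j({\bf y})\geq r_j({\bf x})>1$, every intermediate vector has all coordinates $>1$ and hence corresponds to a legitimate point of $D$. Thus the corollary follows in a few lines from Lemma~\ref{monotonicity}.
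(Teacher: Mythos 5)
Your proof is correct and follows exactly the route the paper intends: rewrite subordination as $r_j({\bf x})\leq r_j({\bf y})$ for all $j$ with strict inequality at some $j_0$ (forced by $x_0=y_0=1$), then invoke the strict coordinatewise monotonicity of $F$ from Lemma~\ref{monotonicity}. The paper simply states the corollary as an immediate reformulation; you have spelled out the same argument in more detail, including the (minor but worth noting) check that intermediate ratio vectors stay in the relevant domain.
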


If ${\bf{x}}$ is a minimizer of the constraint optimization problem (\ref{minim1}), (\ref{L1b}), then $f({\bf x})=\gamma$. Indeed, for a proof by contradiction, assume that ${\bf{x}}$ is a minimizer and $f({\bf{x}})<\gamma$. Then for $t\in [0,1)$, we can define $\tilde x_j(t) = (1-t) x_j+t x_0$. Since $f\circ\tilde{\bf{x}}$ is continuous in
$t$ and 
\begin{equation}
 f(\tilde{\bf{x}}(0))=f({\bf x})<\gamma,
\end{equation}
there exists $t>0$ such that $f\left(\tilde{\bf{x}}(t)\right)<\gamma$. 
However, the function
\begin{equation}
 \eta\left(\tilde{\bf x}(t)\right)= {\prod_{j=0}\limits^{m-1}\left((1-t) x_j+t x_0)\right)}
\end{equation}
is decreasing in $t$, so 
\begin{equation}
 \eta\left(\tilde{\bf x}(t)\right) < \eta\left({\bf x}\right),
\end{equation}
and ${\bf x}$ cannot be a minimizer. Hence we can replace constraint (\ref{L1b}) by the equality
\begin{equation}
  f({\bf{x}})=
\sum\limits_{j=0}^{m-1} \pprod\limits_{ i=0}^{m-1}\frac{x_i}{\vert x_i-x_j\vert}
= \gamma. \label{L1c}
\end{equation}
As we now show, this equation defines a smooth manifold within $D$. It is enough to verify that $\nabla f \neq 0$. 
Note first that
\begin{equation}
  \frac{\partial}{\partial x_k} \frac{x_k}{\left| x_k -x_j\right|} =-x_j \frac{1}{x_k-x_j}\frac{1}{\left| x_k -x_j\right|}.
\end{equation}
 Now calculate for $ j\neq k$ using this fact
 \begin{eqnarray}
     \frac{\partial}{\partial x_k} {\pprod \limits_{ i=0 }^{m-1}}\frac{x_i}{\vert x_i-x_j\vert}
      &=&
      \left({\pprod\limits_{\substack{ i=0 \\ i\neq k}}^{m-1}}
      \frac{x_i}{\vert x_i-x_j\vert}\right) \left(-x_j\frac{1}{x_k-x_j}\frac{1}{\left| x_k -x_j\right|}\right)\\
      &=&-\frac{\eta({\bf x})}{x_k}\frac{(-1)^{j}}{x_k-x_j} b_j ,
 \end{eqnarray}
 where we set from now on 
\begin{equation} \label{bjdef}
 b_j({\bf x})=\pprod\limits_{ i=0 }^{m-1}
      \frac{1}{x_i-x_j}.
\end{equation}
Note that $(-1)^j b_j({\bf x})$ is always positive.

Furthermore, for $j=k$,
   \begin{eqnarray}
     \frac{\partial}{\partial x_k} \pprod \limits_{ i=0}^{m-1}
\frac{x_i}{\vert x_i-x_k\vert}
      &=& - \psum_{ l=0}^{m-1}  \frac{\eta({\bf x})}{x_k}\frac{(-1)^k}{x_k-x_l} b_k({\bf x}).
    \end{eqnarray}
   
   Hence
   \begin{eqnarray}
     \frac{\partial  f}{\partial x_k} 
      &=& -\frac{1}{x_k} \eta({\bf x}) \psum_{j=0}^{m-1}   \frac{1}{x_k-x_j}\left((-1)^k b_k({\bf x})+(-1)^j b_j({\bf x})\right). \label{partialf}
   \end{eqnarray}

For $k=m-1$, all terms in the sum are positive. Hence

\begin{eqnarray}
     \frac{\partial  f}{\partial x_{m-1}} &<& 0
\end{eqnarray}
and so $\{{\bf x}| f({\bf x})=\gamma\}$ is a manifold within $D$.

We now show that the infimum of $\eta$ subject to (\ref{L1c}) is attained in $D$. Let 
$\eta_0=\inf\limits_{{\bf x}\in D, f({\bf x})=\gamma}\eta({\bf x})$ and let ${\bf x}^{(n)}\in D\cap \left\{f=\gamma \right\}$ be chosen such that $\lim\limits_{n\rightarrow\infty} \eta({\bf x}^{(n)})=\eta_0$. As before, we set $x_0^{(n)}\equiv 1$.

We first show that ${\bf x}^{(n)}$ is bounded. Define $M:=\sup\limits_{n\in\N} \eta\left({\bf x}^{(n)}\right)$. Then for each $n$,
\begin{equation}
 \|{\bf x}^{(n)}\|_\infty = |{\bf x}^{(n)}_{m-1}|\leq \eta\left({\bf x}^{(n)}\right)\leq M,
\end{equation}
as, for each $i$, $1\leq{\bf x}^{(n)}_i\leq{\bf x}^{(n)}_{m-1}$. Since $M<\infty$, it follows that ${\bf x}^{(n)}$ is bounded. We conclude that ${\bf x}^{(n)}$ must have a convergent subsequence ${\bf x}^{(n_k)}\rightarrow {\bf x}^{(\infty)}$.

Now ${\bf x}^{(\infty)}$ cannot lie on the boundary of $D$. Indeed, for any $0\leq j\neq k\leq m-1$, we have 

\begin{equation}
 \gamma=f({\bf x}^{(n)})\geq \pprod\limits_{ i=0}^{m-1}
\frac{x^{(n)}_i}{\vert x^{(n)}_i-x^{(n)}_j\vert}
\geq \frac{1}{M^{m-2}|x^{(n)}_{j}-x^{(n)}_{k}|},
\end{equation}
which implies that $|x^{(n)}_{j}-x^{(n)}_{k}|\geq \frac{1}{\gamma M^{m-2}}>0$. It follows that ${\bf x}^{(n)}$ stays away from the boundary of $D$, which implies that ${\bf x}^{(\infty)}\in D$. Thus, problem (\ref{minim1}), (\ref{L1c}) must have at least one minimizer ${\bf{x}}_{min}={\bf x}^{(\infty)}$ in $D$. Note that a priori, there can be more than one minimizer.

%We conclude that the minimum of the minimization problem (\ref{minim1}, \ref{L1c}) must be attained at %a critical point in $D\cap\left\{f=\gamma\right\}$.

As $\{{\bf x}| f({\bf x})=\gamma\}$ is a manifold within $D$, every minimizer ${\bf x}_{min}=(x_1,\dots,x_{m-1})$ of the constrained optimization problem given by (\ref{minim1}) and (\ref{L1c}) solves the associated Lagrange multiplier equations, i.e., there exists $\nu=\nu({\bf x}_{min})\in \mathbb R$ such that
\begin{eqnarray}
  \nu \nabla \eta ({{\bf{x}}_{min}}) + \nabla f({\bf{x}}_{min}) &=& 0\label{lagmult},\\
  f({\bf x}_{min})&=&\gamma.\label{lagmult2}
\end{eqnarray}

Combined with (\ref{partialf}) and the relation 
$\frac{\partial}{\partial y_k} \eta({\bf{y}})=\frac{1}{y_k} \eta({\bf y})$, the Lagrange multiplier equations (\ref{lagmult}), (\ref{lagmult2}) take the explicit form 
\begin{eqnarray}
 \psum_{j=0}^{m-1}   \frac{1}{x_k-x_j}\left((-1)^k b_k({\bf x}_{min})+(-1)^j b_j({\bf x}_{min})\right)&=
&\nu, \label{lagexp}\\
 f({\bf x}_{min}) &=& \gamma \label{lagexp2}
\end{eqnarray}
for $k=1,\dots,m-1$ and $x_0\equiv 0$ as before.

Note that any critical point ${{\bf{x}}_{crit}}$ of the minimization problem for $\eta$ on $D$ solves equations (\ref{lagexp}), (\ref{lagexp2}) for some $\nu$. In the following section, we will show that in fact $\eta$ has a unique critical point in $D$.
%%%%%%%%%%%%%%%%%%%%%%%%%%%%%%%%%%%%%%%%%%%%%%%%%%%%%%%%%%%%%%%%%%%%%

\section{Solution of the relaxed minimization problem}
\label{solution_relax}
\begin{theorem}\label{mainthm}
The minimum value of $\eta$ on the manifold $\{f=\gamma\}$ in $D$ is given by
\begin{equation}
 \eta=\eta_{min}=\frac{\sinh(2m\beta)}{(2\sinh\beta)^{2m-1}\cosh\beta} \label{etamin}
\end{equation}
where $\beta=\beta(m,\gamma)$ is the unique positive solution of the equation
\begin{equation}
\frac{\cosh\left((2m-1)\beta\right)}{\cosh \beta}=\gamma.\label{betaform}
\end{equation}
The minimum value $\eta_{min}$ is attained at the unique point ${\bf x}_{min}=(x_1,  \dots, x_{m-1})$, where 
\begin{equation} \label{minset}
 x_j= 1+\frac{1}{2\sinh^2\beta}\left(1+ z_j\right), \ \ \ j=1, \dots, m-1.
 \end{equation}
Here $z_j = cos\left(\frac{m-j}{m}\pi \right)$, $j=1,\dots, m-1$, are the zeros of the Chebyshev polynomial of the second kind of degree $m-1$. 
\end{theorem}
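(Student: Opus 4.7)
The plan is to recast the Lagrange multiplier equations as a single polynomial identity, identify the resulting polynomial as an affine rescaling of the Chebyshev polynomial $T_m$ via the classical equioscillation theorem, and then extract (\ref{betaform}) and (\ref{etamin}) by elementary evaluations.

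First I reduce (\ref{lagexp}) to a polynomial identity. Set $P(x):=\prod_{j=0}^{m-1}(x-x_j)$ and let $L^*$ denote the unique polynomial of degree at most $m-1$ with $L^*(x_j)=(-1)^{m-1-j}$. The ordering $x_0<\cdots<x_{m-1}$ gives $(-1)^j b_j = 1/|P'(x_j)|$, so the rational function $G(z):=\sum_j (-1)^j b_j/(z-x_j)$ equals $L^*(z)/P(z)$. Laurent-expanding $G$ near each $x_k$ and using $\sum_{j\ne k}(x_k-x_j)^{-1}=P''(x_k)/(2P'(x_k))$ collapses (\ref{lagexp}) to $L^{*\prime}(x_k) = \nu P'(x_k)$ for $k=1,\dots,m-1$. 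A degree count then yields the polynomial identity
\[L^{*\prime}(x) = \nu\bigl[P'(x) - m\,Q(x)\bigr],\qquad Q(x):=P(x)/(x-1).\]
Integrating and defining $\Xi(x):=\nu P(x)-L^*(x)$ produces a polynomial of degree $m$ with leading coefficient $\nu$, satisfying $\Xi(x_j)=(-1)^{m-j}$ and $\Xi'(x)=m\nu\,Q(x)$; hence $\Xi$ alternates $\pm1$ at the $x_j$'s with its $m-1$ interior critical points exactly at $x_1,\dots,x_{m-1}$.

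Consequently $\Xi^2-1$ has a simple zero at $1$ (since $\Xi'(1)\ne 0$), double zeros at $x_1,\dots,x_{m-1}$, and by degree count one more real zero $a'$. A short sign analysis, using that $\Xi$ is monotonically increasing on $(x_{m-1},\infty)$ from $\Xi(x_{m-1})=-1$ to $+\infty$ and that $\Xi^2-1>0$ on $(-\infty,1)$, places $a'>x_{m-1}$. Hence $\Xi(x)^2-1 = \nu^2(x-1)(x-a')Q(x)^2$, so $|\Xi|\le 1$ on $[1,a']$ with equality attained alternately at the $m+1$ points $1,x_1,\dots,x_{m-1},a'$. The uniqueness half of Chebyshev's equioscillation theorem then forces $\Xi(x)=T_m(L(x))$ with $L(x):=(2x-1-a')/(a'-1)$ the affine map $[1,a']\to[-1,1]$. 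The interior extrema $L^{-1}(\cos(k\pi/m))$, reindexed to be increasing, reproduce (\ref{minset}) upon setting $a'-1 = 1/\sinh^2\beta$, and matching leading coefficients gives $\nu = 2^{2m-1}\sinh^{2m}\beta>0$.

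For the final formulas: writing $x_j = (\sinh^2\beta + \sin^2(j\pi/(2m)))/\sinh^2\beta$ and invoking the Chebyshev product identity $U_{2m-1}(\cosh\beta) = 2^{2m-1}\cosh\beta\prod_{k=1}^{m-1}[\sinh^2\beta + \sin^2(k\pi/(2m))]$ together with $U_{2m-1}(\cosh\beta)=\sinh(2m\beta)/\sinh\beta$ gives (\ref{etamin}). To obtain (\ref{betaform}), I use the identity $f({\bf x}) = (-1)^{m+1}L^*(0)$ (from Lagrange interpolation applied to the product form of $f$) and evaluate $L^*(0)=\nu P(0)-\Xi(0)$ using $P(0)=(-1)^m\eta_{\min}$ and $\Xi(0)=T_m(-\cosh(2\beta))=(-1)^m\cosh(2m\beta)$ (since $(1+a')/(a'-1)=\cosh(2\beta)$); the angle-addition identity $\cosh\beta\cosh(2m\beta)-\sinh\beta\sinh(2m\beta)=\cosh((2m-1)\beta)$ yields $f=\cosh((2m-1)\beta)/\cosh\beta=\gamma$. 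Strict monotonicity of this function in $\beta$ from $1$ to $\infty$ on $[0,\infty)$ then furnishes a unique $\beta>0$, establishing uniqueness of the critical point. I expect the main obstacle to be the Chebyshev identification in the second paragraph, particularly the sign analysis placing $a'>x_{m-1}$ and ensuring $\nu>0$ (from the direction of minimization) in order to resolve the $\pm$ ambiguity in $\Xi=T_m(L(\cdot))$.
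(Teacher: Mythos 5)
Your proposal is correct in outline and arrives at the same Chebyshev identification, but it takes a genuinely different — and arguably simpler — route through the linear-algebraic heart of the argument. The paper phrases the Lagrange system as a matrix equation $B({\bf x}_{crit}){\bf v}=\nu{\bf e}$, shows by a contour integral that $B({\bf y}){\bf p_y}=\frac{(-1)^m}{m}{\bf e}$, and then has to prove separately (Proposition~\ref{rankthm}, via an explicit Jordan decomposition of a related $m\times m$ matrix) that $\ker B({\bf y})$ is one-dimensional in order to conclude ${\bf v}-m(-1)^m\nu\,{\bf p}_{{\bf x}_{crit}}\in\operatorname{span}\{{\bf e}\}$. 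You replace this entirely by polynomial algebra: the Lagrange conditions become $L^{*\prime}(x_k)=\nu P'(x_k)$, $k=1,\dots,m-1$ (your Laurent-expansion computation checks out, using $(-1)^jb_j=1/|P'(x_j)|$ and $\sum_{j\ne k}(x_k-x_j)^{-1}=P''(x_k)/(2P'(x_k))$), and then the identity $L^{*\prime}=\nu\bigl[P'-mQ\bigr]$ follows because both sides have degree at most $m-2$ and agree at the $m-1$ points $x_1,\dots,x_{m-1}$. Integrating gives $\Xi=\nu P-L^*$, whose derivative is $m\nu Q$, so $\Xi$ alternates $\pm1$ at $1,x_1,\dots,x_{m-1}$ with exactly these interior critical points; the $\Xi^2-1$ factorization and the equioscillation theorem then pin down $\Xi=T_m\circ L$. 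What this buys you is the elimination of Proposition~\ref{rankthm}, which is the most technical lemma in the paper's proof.

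Two small gaps you should close. First, the sign $\nu>0$: you are right that the placement $a'>x_{m-1}$ needs it, and it follows cleanly from the Lagrange equations together with the fact (already used in the paper to show $\{f=\gamma\}$ is a manifold) that $\partial f/\partial x_{m-1}<0$ while $\partial\eta/\partial x_{m-1}=\eta/x_{m-1}>0$, so $\nu=-(\partial f/\partial x_{m-1})/(\partial\eta/\partial x_{m-1})>0$ at every critical point; the paper itself glosses over the analogous issue (its $q$ is only guaranteed to change sign beyond $x_{m-1}$ when $(-1)^m\nu<0$). Second, your computation of $\gamma$ via $f=(-1)^{m+1}L^*(0)=\cosh(2m\beta)-\nu\eta_{\min}$ and the addition formula is correct and is a nice alternative to the paper's evaluation through Lemma~\ref{sinprod} and the logarithmic derivative $T_m''/T_m'$; just be sure to record the intermediate identity $\nu\eta_{\min}=\tanh\beta\,\sinh(2m\beta)$, which is what makes the $\cosh((2m-1)\beta)$ appear. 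The Chebyshev product identity you invoke for $U_{2m-1}(\cosh\beta)$ is standard and correct (pair the zeros $\cos(k\pi/2m)$ with $\cos((2m-k)\pi/2m)$ and use $\cosh^2\beta-\cos^2\theta=\sinh^2\beta+\sin^2\theta$).
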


\begin{proof}
The minimization problem (\ref{minim1}), (\ref{L1c}) assumes its minimum in $D$, so there must be at least one critical point ${\bf x}_{crit}=(x_1,\dots,x_{m-1})$  with $1<x_1<\dots<x_{m-1}$. 

To prove uniqueness, we will express the associated Lagrange multiplier problem as a nonlinear matrix equation and then show using a rank argument, which is established by Proposition \ref{rankthm}, that the equation can have only the solution given by (\ref{minset}).

As in (\ref{lagexp}), ${\bf x}_{crit}=(x_1,\dots,x_{m-1})$ must satisfy 
 
\begin{equation}\label{lagmult3}
\psum\limits_{j=0}^{m-1}   \frac{1}{x_k-x_j}\left((-1)^k b_k({\bf x}_{crit})+(-1)^j b_j({\bf x}_{crit})\right)=
\nu({\bf x}_{crit}),
\end{equation}
for $k=1,\dots,m-1$ and, again, $b_j({\bf x}_{crit})=\pprod\limits_{ i=0}^{m-1}
      \frac{1}{x_i-x_j}$. 

In matrix notation, the statement reads 
\begin{equation}\label{mateq}
B({\bf x}_{crit}) {\bf v} = \nu({\bf x}_{crit}) {\bf e},
\end{equation}
where ${\bf e}= (1,1,\dots, 1)^T\in \mathbb{R}^{m-1}$, ${\bf v}= (1, -1, 1,-1,\dots)^T\in \mathbb{R}^{m}$  and the matrix-valued function $B: \R^{m-1}\rightarrow \R^{(m-1)\times m}$ is given by
\begin{equation}
B({\bf y})=\left(\begin{array}{ccccc}
 \frac{b_0({\bf y})}{y_1-y_0} & \psum\limits_{j=0}^{m-1}  
 \frac{b_1({\bf y})}{y_1-y_j} &\frac{b_2({\bf y})}{y_1-y_2}&\cdots& \frac{b_{m-1}({\bf y})}{y_1-y_{m-1}}\\
  \frac{b_0({\bf y})}{y_2-y_0} & \frac{b_1({\bf y})}{y_2-y_1} &\psum\limits_{j=0}^{m-1}
   \frac{b_2({\bf y})}{y_2-y_j} &\cdots& \frac{b_{m-1}({\bf y})}{y_2-y_{m-1}}\\
  \vdots&\vdots&\vdots&\ddots &\vdots\\
 \frac{b_{0}({\bf y})}{y_{m-1}-y_0} &  \frac{b_{1}({\bf y})}{y_{m-1}-y_1} &  \frac{b_{2}({\bf y})}{y_{m-1}-y_2} &\cdots &\psum\limits_{j=0}^{m-1}   \frac{b_{m-1}({\bf y})}{y_{m-1}-y_j}
\end{array}
\right),
\end{equation}
where ${\bf y}=(y_1,\dots,y_{m-1})$ and as before $y_0\equiv 1$.

For given ${\bf y}=(y_1, \dots, y_{m-1})$ let $p_{\bf y}(s)$ be a polynomial such that 
\begin{equation} \label{critp}
 p_{\bf y}'(s)=\prod\limits_{j=1}^{m-1}(s-y_j).
\end{equation}
For definiteness, we normalize $p_{\bf y}(0)=0$.
Let $\Gamma$ be a positively oriented circle in $\mathbb C$ of radius $R$ large enough to enclose all $y_j$'s, including $y_0\equiv 1$.
We now calculate the integral 
\begin{equation}\label{ikintegral}
I_k = \frac{1}{2\pi i}\oint\limits_\Gamma \frac{p_{\bf y}(z)}{(z-y_k)(z-y_0)p_{\bf y}'(z)} dz, \hspace{12pt} k=1,\dots, m-1
\end{equation}
in two different ways. 

Firstly, letting $R\rightarrow \infty$, we see that $I_k = \frac{1}{m}$.
Secondly, we compute the integral using the residues at $y_j$, $0\leq j\leq m-1$.
For the residue $R_j$ at $y_j$, $j\neq k$, we obtain
\begin{equation}
 R_j  = (-1)^{m-1} \frac{b_j({\bf y})}{y_j-y_k}p(y_j).
\end{equation}
At $z_k$, we have a double root in the denominator of the integrand in (\ref{ikintegral}), so
\begin{eqnarray}
R_k = \left.\left(\frac{p_{\bf y}(z)}{\prod\limits_{\substack{i=0\\i\neq k}}^{m-1} (z-y_i)}\right)'\right|_{z=y_k} 
&=& \frac{p_{\bf y}'(y_k)}{\prod\limits_{\substack{i=0\\i\neq k}}^{m-1} (y_k-y_i)}-p_{\bf y}(y_k)\sum\limits_{\substack{j=0\\j\neq k}}^{m-1} \frac{\prod\limits_{\substack{i=0\\i\neq j,k}}^{m-1} (y_k-y_i)}{\left(\prod\limits_{\substack{i=0\\i\neq k}}^{m-1} (y_k-y_i)\right)^2}\\
&=&    (-1)^{m-1}\psum\limits_{j=0}^{m-1} \frac{ b_k({\bf y})}{y_j-y_k} p_{\bf y} (y_k)
\end{eqnarray} 

Summing the residues, we conclude that for $k=1,\dots, m-1$:
\begin{equation}\label{reseq}
\frac{(-1)^{m-1}}{m} = \psum\limits_{j=0}^{m-1}
 \frac{ 1}{y_j-y_k} \left(b_k({\bf y})p_{\bf y}(y_k) +b_j({\bf y}) p_{\bf y}(y_j)\right)
\end{equation}
or equivalently
\begin{equation}\label{mateq2}
B ({\bf y}){\bf p_y} = \frac{(-1)^m}{m}{\bf e},
\end{equation}
where ${\bf p_y} = (p_{\bf y}(y_0), p_{\bf y}(y_1),\dots, p_{\bf y}(y_{m-1})$.

The normalization $p_{\bf y}(0)=0$ plays no role in the above calculation, and so (\ref{mateq2}) also holds for ${\bf p_y}+{\bf e}$. Hence, the vector ${\bf e}$ lies in the kernel of $B(\bf{y})$ for any ${\bf y}$. In Proposition \ref{rankthm}, we will show that $\operatorname{dim}\operatorname{Ker} B({\bf y})=1$, and hence $\operatorname{Ker} B({\bf y})$ is spanned by ${\bf e}$. In particular, this shows that $\nu({\bf x}_{crit})\neq 0$: Otherwise, $v$ would be collinear to $e$, which is impossible.

Specifying ${\bf y}= {\bf x}_{crit}$, one obtains 
\begin{equation}\label{mateq3}
B ({\bf x}_{crit}){\bf p}_{{\bf x}_{crit}} = \frac{(-1)^m}{m}{\bf e},
\end{equation}
and it follows that 
\begin{equation}\label{mateq4}
B ({\bf x}_{crit})\left[m (-1)^m \nu({\bf x}_{crit}) {\bf p}_{{\bf x}_{crit}}\right] = \nu({\bf x}_{crit}){\bf e}.
\end{equation}
By (\ref{mateq}), $\bf v$ also solves (\ref{mateq4}), and thus 
\begin{equation}
 {\bf v} - m (-1)^m \nu({\bf x}_{crit}){\bf p}_{{\bf x}_{crit}}  = c\,{\bf e} 
\end{equation}
for some constant $c$.

Set 
\begin{equation}
 q(s)= m (-1)^m \nu({\bf x}_{crit}) p_{{\bf x}_{crit}} (s)  +c.
\end{equation}
Then $q$ is a polynomial of degree $m$ with critical points at the $x_j$, $j=1,\dots, m-1$ such that $q(x_j)=(-1)^j$, $j=1,\dots, m-1$.
As $q$ cannot have any more critical points and must be monotonic for $x>x_{m-1}$, then ultimately, it will change sign and there is a unique point $x_m>x_{m-1}$ such that $q(x_m)=-q(x_{m-1})=(-1)^m$.
Hence the polynomial $u$ given by 
\begin{equation}
 u(s)=(-1)^m q\left(\frac{x_m-1}{2} (s+1)+1\right)
\end{equation}
has the equi-oscillation property, and we conclude by Proposition \ref{equosc} that $u(s)=T_m(s)$.
That implies, that if $z_m$, $j=1,\dots, m-1$, are the extrema of $T_m$ -- i.e., the zeros of the Chebyshev polynomials of the second kind of degree $m-1$ -- then the extrema of $q$ are given by
\begin{equation}
x_j= \frac{x_m-1}{2}(1+z_j)+1.
\end{equation}
Thus all critical points ${\bf x}_{crit}=(x_1,\dots,x_{m-1})$ are given by
\begin{equation}
 x_j = x_j(K):=1 + K (1 + z_j) \label{xK}
\end{equation}
for some constant $K$.
 It follows from Lemma \ref{monotonicity} that $f({\bf x}(K))$ is strictly monotonic in $K$, i.e., different values of $K$ correspond to different values of $\gamma$. This proves that $\eta$ has a unique critical point on $\{f=\gamma\}$
in $D$, and, in particular, that ${\bf x}_{min}$ is unique and given by (\ref{minset}).
 
 We now compute $K=K(m,\gamma)$. The calculation uses several facts about Chebyshev polynomials and their roots, which we have collected in the Appendix. Let $\beta=\beta(m, \gamma)>0$ be defined through the relation 
 \begin{equation}
 K=\frac{1}{2\sinh^2\beta}.\label{Kbeta}
\end{equation}
 Noting that $z_i=-z_{m-i}$, we obtain from Lemma~\ref{sinprod}
\begin{eqnarray}
 f({\bf x}) & =&  \prod\limits_{i=1}^{m-1} \frac{1+K (1+z_i)}
{K\left\vert z_i-z_0\right\vert}+\sum\limits_{j=1}^{m-1} \prod\limits_{\substack{i=0\\i\neq j}}^{m-1}
\frac{1+K (1+z_i)}
{K \left\vert z_i-z_j\right\vert}\\
&=&\frac{2^{m-1}}{m}\prod\limits_{i=1}^{m-1} \frac{1+K (1+z_i)}
{K}+\sum\limits_{j=1}^{m-1}\frac{2^{m-1}(1-z_j)}{m} \prod\limits_{\substack{i=0\\i\neq j}}^{m-1}
\frac{1+K (1+z_i)}
{K}\\
&=& \left[\frac{2^{m-1}}{m}\prod\limits_{i=1}^{m-1}
\left(\frac{1}{K} +1-z_{m-i}\right) \right]
\left[1+\sum\limits_{j=1}^{m-1}\frac{1+z_{m-j}}
{1+K(1-z_{m-j})} \right]\\
&=& \left[\frac{2^{m-1}}{m}\prod\limits_{i=1}^{m-1}
\left(\frac{1}{K} +1-z_{i}\right) \right]
\left[1+\frac{1}{K}\sum\limits_{j=1}^{m-1}\frac{1+z_{j}}{\left(1+\frac{1}{K}\right)-z_{j}} \right].
\end{eqnarray}
Now  $1+\frac{1}{K}=1+2\sinh^2(\beta)=\cosh(2\beta)$ and
\begin{equation}
\frac{2^{m-1}}{m}\prod\limits_{i=1}^{m-1}
\left(\frac{1}{K} +1-z_{i}\right) = \frac{1}{m^2}T_m'\left(1+\frac{1}{K}\right)=\frac{1}{m^2}T_m'\left(\cosh(2\beta)\right)=\frac{\sinh(2 m \beta)}{m \sinh(2\beta)}.\label{prodfinal}
\end{equation}
Furthermore, differentiating  (\ref{coshcheby}), we obtain for 
$z=\cosh(\tau)$
\begin{equation}
\sum_{j=1}^{m-1}\frac{1}{z-z_j}= \frac{T_m''(z)}{T_m'(z)}= \frac{m\coth(m\tau)-\coth(\tau)}{\sinh{\tau}}.
\end{equation}
Hence
\begin{eqnarray}
& &1+\frac{1}{K}\sum\limits_{j=1}^{m-1}\frac{1+z_{j}}{\left(1+\frac{1}{K}\right)-z_{j}}\\
&=& 1 +(\cosh(2\beta)-1)\sum\limits_{j=1}^{m-1}\left( -1 + \frac{1+\cosh(2\beta)}{\cosh(2\beta)- z_{j}} \right)\\
&=& 1- (m-1)(\cosh(2\beta)-1)+ (\cosh^2(2\beta)-1)\sum\limits_{j=1}^{m-1} \frac{1}{{\cosh(2\beta)- z_{j}}}\\
&=& 1- (m-1)(\cosh(2\beta)-1)
+ \sinh^2({2\beta}) \frac{m\coth(2m\beta)-\coth(2\beta)}{\sinh{2\beta}}\\
&=&m\left( 1- \cosh(2\beta) +  \sinh({2\beta}) {\coth(2m\beta)}\right)\label{sumfinal}.
\end{eqnarray}
Combining (\ref{sumfinal}) and (\ref{prodfinal}) yields
\begin{eqnarray}
 \gamma= f({\bf x}) & =& \frac{\sinh(2 m \beta) - \cosh(2\beta)\sinh(2 m \beta) + \sinh({2\beta})\cosh(2 m \beta) }{ \sinh(2\beta)}\\
&=& \frac{\sinh(2m\beta)-\sinh((2m-2)\beta)}{2\sinh(\beta)\cosh(\beta)}\\
&=& {\frac{2\cosh((2m-1)\beta)\sinh(\beta)}{2\sinh(\beta)\cosh(\beta)}}\\
&=& \frac{\cosh((2m-1)\beta)}{\cosh(\beta)},
\end{eqnarray}
which proves (\ref{betaform}). As $\frac{\cosh((2m-1)\beta)}{\cosh(\beta)}$ is strictly monotonic in $\beta$, $\beta>0$ is uniquely determined from $\gamma$. Of course, this fact also follows from the uniqueness of $K$ proved above.

Now finally, using (\ref{prodfinal}), we write
\begin{equation}
 \eta_{min}
 = \prod\limits_{i=0}^{m-1} (1+ K (1+ z_i)) 
= K^{m-1} \prod\limits_{i=0}^{m-1} \left(\frac{1}{K} +1 +z_i\right)
= \frac{\sinh(2 m \beta)}{(2\sinh(\beta))^{2m-1} \cosh(\beta)}
\end{equation}
\end{proof}

It remains to show that $B({\bf x}_{crit})$ has rank $m-1$. We will show, more generally, that $B({\bf y})$ has rank $m-1$ for an arbitrary ${\bf y}=(y_1,\dots,y_{m-1})$, as long as $y_{i}\neq y_j$ for $i\neq j$ in $\{0,1,\dots, m-1\}$. As before we set $y_0\equiv 1$. The proof of Proposition \ref{rankthm} below goes through without this restriction on $y_0$, but this more general fact is of no consequence for the results in this paper.

Factor out $b_j({\bf y})$ from the $j$-th column, $j=0, \dots {m-1}$ and extend the resulting matrix to an $m\times m$ square matrix $\tilde{B}({\bf y})$ by adding a row that is the negative of the sum of all the other rows, as follows. 
\begin{equation}
\tilde B({\bf y})=\left(\begin{array}{ccccc}
\psum\limits_{l=0}^{m-1}   \frac{1}{y_0-y_l} & \frac{1}{y_0-y_1} & \frac{1}{y_0-y_2}&\cdots& \frac{1}{y_0-y_{m-1}}\\
 \frac{1}{y_1-y_0} & \psum\limits_{l=0}^{m-1} \frac{1}{y_1-y_l} &\frac{1}{y_1-y_2}&\cdots& \frac{1}{y_1-y_{m-1}}\\
  \frac{1}{y_2-y_0} & \frac{1}{y_2-y_1} &\psum\limits_{l=0}^{m-1}   \frac{1}{y_2-y_l} &\cdots& \frac{1}{y_2-y_{m-1}}\\
  \vdots&\vdots&\vdots&\ddots &\vdots\\
 \frac{1}{y_{m-1}-y_0} &  \frac{1}{y_{m-1}-y_1} &  \frac{1}{y_{m-1}-y_2} &\cdots &\psum\limits_{l=0}^{m-1}   \frac{1}{y_{m-1}-y_l} 
\end{array}
\right)
\end{equation}
Clearly, $\operatorname{rank} \tilde B({\bf y})=\operatorname{rank} B({\bf y})$.
We prove that $\tilde B({\bf y})$ has rank $m-1$ by explicitly showing that $\tilde B({\bf y})$ is similar to the Jordan block
 \begin{equation}
  J=\left(\begin{array}{cccc}
0&1&0&\cdots\\
0&0&1&\cdots\\
0&0&0&\cdots\\
\vdots&\vdots&\vdots&\ddots
\end{array}
\right).
 \end{equation}

\begin{proposition}\label{rankthm}
For $m\geq 2$,  $\tilde B({\bf y})$ has the Jordan decomposition
\begin{equation}
\tilde B({\bf y}) = P({\bf y}) J P({\bf y})^{-1}, \label{jordan}
\end{equation}
where 
\begin{equation}
 P({\bf y}) =\left(\begin{array}{cccc}
{b_0({\bf y})}&{b_0({\bf y})} (y_0-y_{m-1})&\cdots&{b_0({\bf y})}\frac{(y_0-y_{m-1})^{m-1}}{(m-1)!}\\
{b_1({\bf y})}&{b_1({\bf y})} (y_1-y_{m-1})&\cdots&b_1({\bf y})\frac{(y_1-y_{m-1})^{m-1}}{(m-1)!}\\
{b_2({\bf y})}&{b_2({\bf y})} (y_2-y_{m-1})&\cdots&b_2({\bf y})\frac{(y_2-y_{m-1})^{m-1}}{(m-1)!}\\
\vdots&\vdots&\ddots&\vdots\\
{b_{m-2}({\bf y})}&{b_{m-2}({\bf y})} (y_{m-2}-y_{m-1})&\cdots&b_{m-2}({\bf y})\frac{(y_{m-2}-y_{m-1})^{m-1}}{(m-1)!}\\
b_{m-1}({\bf y})& 0 & \cdots &0
\end{array}\right).
\end{equation}
Here the $b_j({\bf y})$'s are defined as in (\ref{bjdef}).
\end{proposition}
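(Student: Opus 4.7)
The claim amounts to two things: the matrix identity $\tilde B(\mathbf{y}) P(\mathbf{y}) = P(\mathbf{y}) J$, and the non-singularity of $P(\mathbf{y})$. My plan is to tackle these in turn, reducing the first to a single family of rational-function ``moment identities'' and the second to a Vandermonde calculation.

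For the matrix identity, I would first observe that (with the convention $0^0 = 1$) every column of $P(\mathbf{y})$ admits the uniform description
\begin{equation*}
P^{(\ell)}_k = b_k(\mathbf{y}) \, \frac{(y_k - y_{m-1})^\ell}{\ell!}, \qquad k,\ell = 0,1,\ldots,m-1,
\end{equation*}
the apparent special structure of the last row reflecting only that $(y_{m-1}-y_{m-1})^\ell = 0^\ell$. Since $J$ is a single nilpotent Jordan block, $PJ$ is the column-shift of $P$, so $\tilde B P = PJ$ is equivalent to $\tilde B P^{(0)} = 0$ together with $\tilde B P^{(\ell)} = P^{(\ell-1)}$ for $\ell = 1,\ldots,m-1$. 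Expanding $(y_j-y_{m-1})^\ell$ by the binomial theorem and invoking $\ell'\binom{\ell}{\ell'} = \ell\binom{\ell-1}{\ell'-1}$, all of these identities reduce to a single family of moment identities
\begin{equation*}
\sum_{j=0}^{m-1} \tilde B_{kj}(\mathbf{y}) \, b_j(\mathbf{y}) \, y_j^\ell = \ell \, b_k(\mathbf{y}) \, y_k^{\ell-1}, \qquad k,\ell = 0,1,\ldots,m-1,
\end{equation*}
with the $\ell=0$ case being precisely the kernel relation $\tilde B P^{(0)} = 0$.

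To prove the moment identity, I would use partial fractions. Set $c_j := \prod_{l\neq j}(y_j - y_l)^{-1} = (-1)^{m-1}b_j(\mathbf{y})$. For $0 \leq \ell \leq m-1$, the function $z^\ell/\prod_{l=0}^{m-1}(z-y_l)$ is rational, vanishes at infinity, and has residue $c_j y_j^\ell$ at each $y_j$, hence
\begin{equation*}
\frac{z^\ell}{\prod_{l=0}^{m-1}(z-y_l)} = \sum_{j=0}^{m-1} \frac{c_j y_j^\ell}{z-y_j}.
\end{equation*}
Multiplying by $(z-y_k)$, differentiating in $z$, and evaluating at $z = y_k$ yields, after a short computation,
\begin{equation*}
\sum_{j \neq k} \frac{c_j y_j^\ell}{y_k - y_j} + c_k y_k^\ell \sum_{l \neq k} \frac{1}{y_k - y_l} = \ell \, c_k y_k^{\ell-1}.
\end{equation*}
By the definition of $\tilde B$, the left-hand side equals $\sum_j \tilde B_{kj}(\mathbf{y}) c_j y_j^\ell$, so multiplying through by $(-1)^{m-1}$ gives the desired moment identity.

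For the non-singularity of $P(\mathbf{y})$, I would expand $\det P$ along its last row $(b_{m-1},0,\ldots,0)$, leaving $\det P = \pm b_{m-1}\det M$ with $M$ the $(m-1)\times(m-1)$ minor indexed by rows $k=0,\ldots,m-2$ and columns $\ell=1,\ldots,m-1$. Pulling $b_k(y_k-y_{m-1})$ out of row $k$ and $1/\ell!$ out of column $\ell$ reduces $M$ to a standard Vandermonde determinant in the nodes $y_0-y_{m-1},\ldots,y_{m-2}-y_{m-1}$, which is nonzero since the $y_j$'s are distinct; all factors pulled out are likewise nonzero. The main obstacle is locating the moment identity: one needs a single rational-function identity that simultaneously produces both the off-diagonal entries $(y_k-y_j)^{-1}$ and the diagonal entry $\sum_{l\neq k}(y_k-y_l)^{-1}$ of $\tilde B$, and the device of multiplying the partial-fraction identity by $(z-y_k)$ before differentiating is precisely what delivers both upon evaluation at $z=y_k$.
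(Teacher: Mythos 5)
Your proof is correct, and it reaches the same conclusion by a genuinely different route. The paper's argument starts from the ``polynomial in $t$'' identity $\delta^{(0)}_{n-(m-1)}=(-1)^{m-1}\sum_k b_k(\mathbf{y})(y_k-t)^n$ (obtained, as you do, from a contour integral/residue count), but then takes a \emph{variation-of-nodes} derivative $\partial/\partial y_j$ — exploiting the $y_j$-dependence of the weights $b_j(\mathbf{y})$ — and finally specializes $t=y_{m-1}$ to land directly on the column-shift identity $\tilde B P = PJ$. You instead differentiate the partial-fraction identity in the spectral variable $z$ (after clearing the pole at $y_k$), which yields the tidy moment identity $\sum_j \tilde B_{kj}\,b_j\,y_j^\ell = \ell\,b_k\,y_k^{\ell-1}$; you then need an extra binomial-theorem step to translate this into $\tilde B P^{(\ell)} = P^{(\ell-1)}$. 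The trade-off is that the paper avoids the binomial shuffle by carrying the free parameter $t$ through the computation, while your version isolates a cleaner, more ``universal'' linear-algebra statement (essentially that $\tilde B$ acts as differentiation on the polynomial basis $\{y^\ell\}$ in the $b$-weighted coordinates). For the invertibility of $P$, both arguments reduce to the nonvanishing of a Vandermonde determinant in the shifted nodes $y_j-y_{m-1}$; the paper states this abstractly as $P=D_1 V D_2$, whereas you make it explicit by expanding along the last row — same content, either phrasing is fine.
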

\begin{proof}
The matrix $P({\bf y})$ is of the form $D_1 V D_2$, where $D_1, D_2$ are invertible diagonal matrices and $V$ is a Vandermonde matrix. Hence, $P({\bf y})$ is invertible and the proof of (\ref{jordan}) is equivalent to showing that $\tilde B({\bf y}) P({\bf y}) = P({\bf y})J$, that is, for $0\leq j,n \leq m-1$,

\begin{equation} \label{rankeq}
\sum\limits_{\substack{k=0\\ k\neq j}}^{m-1}   \frac{b_k({\bf y})}{y_j-y_k}\frac{(y_k-y_{m-1})^{n}}{n!}+
 \sum\limits_{\substack{l=0\\ l\neq j}}^{m-1}   \frac{b_j({\bf y})}{y_j-y_l}\frac{(y_j-y_{m-1})^{n}}{n!}= {b_j({\bf y})}\frac{(y_j-y_{m-1})^{n-1}}{(n-1)!},
\end{equation}
where $\frac{1}{(-1)!}=0$.

 The proof is based on the counterclockwise integral defined for all $t\in\C$
\begin{equation}
J_{m,n}(t)=\frac{1}{2\pi i}\oint\limits_\Gamma \frac{(z-t)^n}{\prod\limits_{i=0}^{m-1}(z-y_i)} dz, \ \ \ \ 0\leq n\leq m-1
\end{equation}
over a circle $\Gamma$ of radius $R$ large enough that it encloses all $y_j$'s. Letting $R\rightarrow \infty$, we see that $J_{m,n}=\delta^{(0)}_{n-(m-1)}$ independent of $t$. On the other hand, note that the residue at $y_k$ is $(-1)^{m-1}b_k({\bf y}) (y_k-t)^n$. Hence
\begin{equation}\label{indseed1}
\delta^{(0)}_{n-(m-1)}=J_{m,n}=(-1)^{m-1} \sum\limits_{k=0}^{m-1} b_k({\bf y})(y_k-t)^n.
\end{equation}
Now
\begin{equation}
 \frac{\partial b_k}{\partial y_j} =
\begin{cases}
  \frac{b_k({\bf y})}{y_k-y_j}\ \ \ \ \ \ \ \ \ \ \ \  \text{for $j\neq k$}\\
\ \\
\sum\limits_{\substack{l=0\\l\neq j}}^{m-1}   \frac{b_j({\bf y})}{y_l-y_j} \ \ \ \ \ \ \  \text{for $j= k$}
\end{cases}
\end{equation}
Hence, differentiating (\ref{indseed1}) with respect to $y_j$, leads to the identity 
\begin{equation}
 \sum\limits_{\substack{k=0\\k\neq j}}^{m-1}   \frac{b_k({\bf y})}{y_k-y_j}(y_k-t)^n+
 \sum\limits_{\substack{l=0\\l\neq j}}^{m-1}   \frac{b_j({\bf y})}{y_l-y_j}(y_l-t)^n + 
 b_l({\bf y})n (y_j-t)^{n-1} =0\label{casen0}.
 \end{equation}
Letting $t\rightarrow y_{m-1}$, one obtains (\ref{rankeq}).
\end{proof}

%%%%%%%%%%%%%%%%%%%%%%%%%%%%%%%%%%%%%%%%%%%%%%%%%%%%%%%%%%%%%%%%%%%
\section{Asymptotics for the relaxed and the discrete minimization problem\label{asymp}}

In the following proposition, we evaluate the dependence on $m$ of the solution ${\bf{x}}={\bf{x}}^{(m)}$ of the relaxed minimization problem. For any fixed $j$, we show that $x_j^{(m)}$ converges as $m\rightarrow\infty$, and we compute the limit.
\begin{proposition}\label{propasymp}
 \begin{enumerate}  
 \item[\rm{(a)}] For $K=K(m,\gamma)$ as in (\ref{xK})
 \begin{equation}
  \frac{2(m-1)^2}{(\cosh^{-1}\gamma)^2}-1\leq K \leq \frac{2m^2}{(\cosh^{-1}\gamma)^2}. \label{parta}
 \end{equation}
 \item[\rm{(b)}] Set $\sigma:=\frac{\pi^2}{\left(\cosh^{-1}\gamma\right)^2}$. Then for all $m$ and all $1\leq j\leq m-1$,
 \begin{equation}
 x^{(m)}_j\leq 1+\sigma j^2.
 \end{equation}
 \item[\rm{(c)}] 
 For any fixed $j\geq 1$,
 \begin{equation}
  \lim_{m\rightarrow\infty} x_j^{(m)} =  1+\sigma  j^2.
 \end{equation}
  \item[\rm{(d)}]
  \begin{equation}
  \lim_{m\rightarrow\infty} \frac{\left(\eta({\bf x}^{(m)})\right)^{1/m}}{m^2}= \frac{1}{(\cosh^{-1}\gamma)^2} = \frac{\sigma}{\pi^2}.
  \end{equation}
 \end{enumerate}
\end{proposition}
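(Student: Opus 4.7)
The heart of the proposition is part (a): once $K(m,\gamma)$ is pinched between the stated bounds, parts (b), (c), (d) follow from Theorem \ref{mainthm} by straightforward asymptotic bookkeeping. So my plan is to focus most effort on (a), and then derive (b)--(d) from it via the explicit formulas $x_j = 1+K(1+z_j)$ and $\eta_{\min} = \sinh(2m\beta)/((2\sinh\beta)^{2m-1}\cosh\beta)$.

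For part (a), set $\alpha := \cosh^{-1}\gamma$. The upper bound on $K$ is the easy direction: from $\cosh\beta \geq 1$ and the defining equation \eqref{betaform} we get $\cosh((2m-1)\beta)\geq \gamma=\cosh\alpha$, hence $(2m-1)\beta\geq \alpha$; combined with $\sinh\beta\geq\beta$, this yields $K = 1/(2\sinh^2\beta) \leq (2m-1)^2/(2\alpha^2) \leq 2m^2/\alpha^2$. For the lower bound, I will use the product-to-sum identity
\begin{equation}
2\cosh((2m-1)\beta) = 2\gamma\cosh\beta = 2\cosh\alpha\cosh\beta = \cosh(\alpha+\beta)+\cosh(\alpha-\beta),
\end{equation}
and since $\cosh(\alpha-\beta)\leq\cosh(\alpha+\beta)$, deduce $(2m-1)\beta \leq \alpha+\beta$, i.e., $\beta \leq \alpha/(2m-2)$. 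Using $\sinh\beta \leq \beta\cosh\beta$, which yields $\sinh^2\beta \leq \beta^2/(1-\beta^2)$ whenever $\beta<1$, I obtain $K \geq (1-\beta^2)/(2\beta^2) \geq 2(m-1)^2/\alpha^2 - 1/2$. The restriction $\beta<1$ is automatic once $m > 1+\alpha/2$; in the finitely many remaining cases the claimed bound reduces to $K\geq$ (a nonpositive number) and is trivial since $K>0$. The main delicacy here is choosing the right identity to extract $\beta \leq \alpha/(2m-2)$; everything else is routine.

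For parts (b) and (c), the key observation is that $z_j = \cos\bigl((m-j)\pi/m\bigr) = -\cos(j\pi/m)$, so
\begin{equation}
x_j^{(m)} = 1 + K\bigl(1+z_j\bigr) = 1 + 2K\sin^2\!\bigl(\tfrac{j\pi}{2m}\bigr).
\end{equation}
For (b), apply $\sin\theta\leq\theta$ and the upper bound from (a) to obtain $x_j^{(m)} \leq 1 + 2K(j\pi/2m)^2 \leq 1 + \sigma j^2$. For (c), fix $j$ and use $\sin(j\pi/(2m))\sim j\pi/(2m)$ as $m\to\infty$ together with the two-sided pinch from (a), which gives $K/m^2 \to 2/\alpha^2$. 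Multiplying, $2K\sin^2(j\pi/(2m))\to 2\cdot(2/\alpha^2)\cdot(j\pi/2)^2 = \sigma j^2$.

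For part (d), the pinch from (a) shows $\beta = \alpha/(2m) + O(1/m^2)$, so $2m\beta\to\alpha$, and thus $\sinh(2m\beta)\to\sqrt{\gamma^2-1}$ and $\cosh\beta\to 1$. Since $\sinh\beta/\beta = 1+O(\beta^2)=1+O(1/m^2)$, we have $(\sinh\beta/\beta)^{2m-1}=\exp(O(1/m))\to 1$, so $(2\sinh\beta)^{2m-1}\sim(2\beta)^{2m-1}=(\alpha/m)^{2m-1}(1+O(1/m))^{2m-1}$, where the last factor approaches a finite positive constant. Putting it together,
\begin{equation}
\eta_{\min} = \Theta(1)\cdot(m/\alpha)^{2m-1},
\end{equation}
so raising to the $1/m$ power and dividing by $m^2$,
\begin{equation}
\frac{\eta_{\min}^{1/m}}{m^2} = \frac{\Theta(1)^{1/m}}{m^2}\cdot\Bigl(\frac{m}{\alpha}\Bigr)^{(2m-1)/m} = \Theta(1)^{1/m}\cdot\frac{1}{\alpha^2}\cdot\Bigl(\frac{m}{\alpha}\Bigr)^{-1/m}\longrightarrow \frac{1}{\alpha^2} = \frac{\sigma}{\pi^2},
\end{equation}
since any constant raised to $1/m$ and $(m/\alpha)^{-1/m}$ both tend to $1$. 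The only place requiring care is checking that the multiplicative error in $(2\sinh\beta)^{2m-1}$ is at worst a bounded factor, which the estimate $\beta = \alpha/(2m) + O(1/m^2)$ from (a) guarantees.
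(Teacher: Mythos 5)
Your proof is correct and takes essentially the same route as the paper: derive two-sided bounds on $\beta$ from the defining equation \eqref{betaform} (lower bound from $\cosh\beta\geq 1$, upper bound from a product-to-sum identity, versus the paper's addition formulas for $\cosh((2m-1)\beta)$ — these yield the same $\beta\le\alpha/(2m-2)$), then convert to the stated pinch on $K$ via an elementary estimate on $\sinh\beta/\beta$, and finally read off (b)–(d) from the explicit formulas $x_j=1+2K\sin^2(j\pi/2m)$ and $\eta_{\min}$. The only differences are cosmetic: you make the paper's ``elementary estimate'' $K\geq 1/(2\beta^2)-1$ explicit via $\tanh\beta\leq\beta$ (getting the slightly stronger $-1/2$), and in (d) you track the $(m/\alpha)^{2m-1}$ factor directly rather than isolating $K/(2m^2)$ as the paper does, but both computations are the same asymptotic bookkeeping.
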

\begin{proof}
We first provide bounds on $\beta$ defined in (\ref{betaform}).
For a lower bound, write
\begin{equation}
\gamma= \frac{\cosh(2m-1)\beta)}{\cosh\beta} = \cosh(2m\beta)- \sinh(2m\beta)\tanh\beta \leq \cosh(2m\beta).
\end{equation}
For an upper bound, we have
\begin{equation}
\gamma= \frac{\cosh(2m-1)\beta)}{\cosh\beta} =\cosh((2m-2)\beta) + \sinh((2m-2)\beta)\tanh\beta\geq \cosh((2m-2)\beta).
\end{equation}
We obtain the bounds
\begin{equation}
\frac{1}{2m}\cosh^{-1}\gamma \leq \beta \leq\frac{1}{2m-2}\cosh^{-1}\gamma.\label{betabounds}
\end{equation}
This implies the upper bound for $K$
\begin{equation}
K=\frac{1}{2\sinh^2\beta}\leq\frac{1}{2\beta^2}\leq\frac{2m^2}{(\cosh^{-1}\gamma)^2}.
\end{equation}
For the lower bound on $K$, we have by an elementary estimate
\begin{equation}
 K=\frac{1}{2\sinh^2\beta}\geq\frac{1}{2\beta^2}-1\geq \frac{2(m-1)^2}{(\cosh^{-1}\gamma)^2}-1.
\end{equation}
This proves (a).

Also 
\begin{equation}
  x^{(m)}_j = 1 + 2 K \sin^2\left(\frac{j\pi}{2m}\right) \leq 1+ \frac{4m^2}{(\cosh^{-1}\gamma)^2}\left(\frac{j\pi}{2m}\right)^2 =1 +\sigma j^2,
\end{equation}
which proves (b).

From (\ref{parta})
\begin{equation}
\lim\limits_{m\rightarrow\infty} \frac{K(m,\gamma)}{m^2}=\frac{2}{(\cosh^{-1}\gamma)^2},
\end{equation}
and so
\begin{equation}
\lim\limits_{m\rightarrow\infty} x_j^{(m)} =1+\frac{2}{(\cosh^{-1}\gamma)^2}\lim\limits_{m\rightarrow\infty} 2 m^2 \sin^2\left(\frac{j\pi}{2m}\right) =   1 + \sigma j^2,
\end{equation}
which proves (c).

Finally, from (\ref{betabounds}) we see that
\begin{equation}
\lim_{m\rightarrow\infty} 2m\beta(m,\gamma) = \cosh^{-1} \gamma,
\end{equation}
and hence from (\ref{etamin})
  \begin{eqnarray}
  \lim_{m\rightarrow\infty} \frac{\left(\eta({\bf x})\right)^{1/m}}{m^2}&=&\lim_{m\rightarrow\infty} \frac{1}{m^2}\left (\frac{\sinh(2m\beta)}{(2\sinh\beta)^{2m-1}\cosh\beta}\right)^{1/m} \\
  &=&  \lim\limits_{m\rightarrow\infty}\left(\frac{1}{4 m^2 \sinh^2\beta}\right)\left(\frac{2\sin\beta \sinh\left(2m\beta\right)}{\cosh\beta}\right)^{1/m}\\
  &=& \lim\limits_{m\rightarrow\infty}\frac{K(m,\gamma)}{2 m^2} = \frac{1}{(\cosh^{-1}\gamma)^2},
  \end{eqnarray}
which proves (d). This completes the proof of the Proposition.
\end{proof}

Motivated by the above proposition, let us define 
\begin{equation}\label{def-w}
w_j := 1+\sigma j^2, ~~~~~j\geq 1\dots,
\end{equation} 
and an associated vector sequence
${\bf w}^{(m)} := (w_1, \dots, w_{m-1})$, $m\geq 2$.
We know that for each $j\geq 1$, 
$x^{(m)}_j \to w_j$ as $m\to \infty$. The following
lemma provides a nonasymptotic relation between 
${\bf x}^{(m)}$ and ${\bf w}^{(m)}$ which will be utilized in proving
Proposition \ref{propasymp-n} and Theorem \ref{thmasop}.

\begin{lemma}\label{ljsubor}
${\bf w}^{(m)}$ is subordinate to ${\bf x}^{(m)}$.
\end{lemma}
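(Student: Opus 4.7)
Unpacking Definition \ref{subord}, the claim amounts to showing that the ratios $w_j/x_j$ are $\geq 1$ at $j=1$ and nondecreasing in $j$. Throughout I would work with the explicit form $x_j = 1 + 2K\sin^2(j\alpha)$ with $\alpha := \pi/(2m)$ (as recorded in the proof of Proposition \ref{propasymp}(b)) and $w_j = 1 + \sigma j^2$. The whole argument will rest on the single estimate
\begin{equation*}
  2K\alpha^2 \;=\; \frac{K\pi^2}{2m^2} \;\leq\; \sigma,
\end{equation*}
which is immediate from the upper bound $K \leq 2m^2/(\cosh^{-1}\gamma)^2$ in Proposition \ref{propasymp}(a) together with $\sigma = \pi^2/(\cosh^{-1}\gamma)^2$.

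The base case $w_1 \geq x_1$ then follows by combining $\sin\alpha \leq \alpha$ with the estimate above: $2K\sin^2\alpha \leq 2K\alpha^2 \leq \sigma$. For the monotonicity step, I would recast $w_j/x_j \leq w_{j+1}/x_{j+1}$ as $w_{j+1}x_j - w_j x_{j+1} \geq 0$ and expand the left-hand side into three pieces,
\begin{equation*}
  \sigma(2j+1) \;-\; 2K\sin(\alpha)\sin((2j+1)\alpha) \;+\; 2K\sigma\bigl[(j+1)^2\sin^2(j\alpha) - j^2\sin^2((j+1)\alpha)\bigr],
\end{equation*}
where the second piece uses the identity $\sin^2 A - \sin^2 B = \sin(A+B)\sin(A-B)$ applied to $\sin^2((j+1)\alpha)-\sin^2(j\alpha)$.

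To conclude, I would verify that the two nontrivial contributions are nonnegative. The bracketed ``mixed'' term is nonnegative because $u \mapsto \sin u / u$ is decreasing on $(0,\pi/2]$, so that $(j+1)\sin(j\alpha) \geq j\sin((j+1)\alpha) \geq 0$ for $j \leq m-2$ (note $(j+1)\alpha < \pi/2$ in this range), and squaring preserves the inequality. The $\sigma(2j+1) - 2K\sin(\alpha)\sin((2j+1)\alpha)$ piece is nonnegative because $\sin(\alpha)\sin((2j+1)\alpha) \leq \alpha \cdot (2j+1)\alpha = (2j+1)\alpha^2$, and hence $2K\sin(\alpha)\sin((2j+1)\alpha) \leq (2j+1)(2K\alpha^2) \leq \sigma(2j+1)$.

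The only mild obstacle is organizing the expansion so that the key estimate $2K\alpha^2 \leq \sigma$ really does all the work. Once the three-term decomposition is in place, one term is handled by the monotonicity of $\sin u/u$ and the other by the elementary bound $\sin u \leq u$, so no further computation is needed.
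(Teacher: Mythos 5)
Your proof is correct and follows essentially the same route as the paper: reduce to showing $w_{j+1}/x_{j+1} \geq w_j/x_j$, split the cross-multiplied inequality into the two pieces $\sigma(2j+1) - 2K(\sin^2((j+1)\alpha) - \sin^2(j\alpha))$ and $2\sigma K((j+1)^2\sin^2(j\alpha) - j^2\sin^2((j+1)\alpha))$, handle the first via $\sin u \leq u$ together with the bound $K \leq 2m^2\sigma/\pi^2$ from Proposition~\ref{propasymp}(a), and handle the second via the monotonicity of $\sin u / u$. The only cosmetic differences are that you treat $w_1 \geq x_1$ as a separate base case (the paper folds it in by setting $w_0 = x_0 = 1$ and proving monotonicity from $j=0$) and that you factor $\sin^2((j+1)\alpha)-\sin^2(j\alpha)$ via $\sin(A+B)\sin(A-B)$ slightly earlier in the calculation.
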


\begin{proof}
By Definition~\ref{subord}, we need to show that for $0\leq j\leq m-2$ and $w^{(m)}_0\equiv x_0^{(m)}\equiv 1$,
\begin{equation}\label{w-x-ratio}
 \frac{w_{j+1}^{(m)}}{x_{j+1}^{(m)}}\geq \frac{w_{j}^{(m)}}{x_{j}^{(m)}},
\end{equation}
that is,
\begin{equation}
 (1+\sigma (j+1)^2)\left(1+2K\sin^2 \left(\frac{j\pi}{2m}\right)\right) \geq
 (1+\sigma j^2)\left(1+2K\sin^2 \left(\frac{(j+1)\pi}{2m}\right)\right),
\end{equation}
or equivalently
\begin{eqnarray}
 \left[\sigma(2j+1) - 2 K \left(\sin^2 \left(\frac{(j+1)\pi}{2m}\right)
-\sin^2 \left(\frac{j\pi}{2m}\right)\right)\right] & & \nonumber \\
+ \left[2\sigma K \left( (j+1)^2\sin^2 \left(\frac{j\pi}{2m}\right)- j^2 \sin^2 \left(\frac{(j+1)\pi}{2m}\right)\right)\right]&\geq& 0. \label{lsuboxexpand}
\end{eqnarray}
We show that both these summands are nonnegative. 

By Proposition \ref{propasymp}(a), 
$K\leq \frac{2m^2\sigma}{\pi^2}$, and so for the first summand, it is sufficient to show that
\begin{equation}
(2j+1) - \frac{4m^2}{\pi^2}\left(\sin^2 \left(\frac{(j+1)\pi}{2m}\right) -\sin^2 \left(\frac{j\pi}{2m}\right)\right)
\geq 0\label{Kmtos}.
\end{equation}
Indeed, by standard trigonometric identities
\begin{equation}
\frac{4m^2}{\pi^2}\left(\sin^2 \left(
\frac{(j+1)\pi}{2m}\right) -\sin^2 \left(\frac{j\pi}{2m}\right)\right)
=\frac{4m^2}{\pi^2} \sin \left(
\frac{(2j+1)\pi}{2m}\right)\sin\left(\frac{\pi}{2m}\right)
\leq 2j+1,
\end{equation}
which proves (\ref{Kmtos}).

On the other hand, the positivity of the second summand follows from the fact that the function $\frac{\sin(y)}{y}$ is decreasing on $[0,\frac{\pi}{2}]$. This completes the proof of (\ref{lsuboxexpand}) and hence the proof of the Lemma.
\end{proof}

The above results for the relaxed minimization problem allow us to draw conclusions for our original problem with the constraint that the filter locations $n^{(m)}_j$ are all integers. 

With $n^{(m)}_1\equiv x^{(m)}_0\equiv 1$, we seek an integer sequence ${\bf n}^{(m)}= (n_2^{(m)},\dots,n^{(m)}_m)$ such that ${\bf n}^{(m)}$ is subordinate in the sense of Definition~\ref{subord} to  
${\bf x}^{(m)}:=\left(1+K(1+z_j) \right)_{j=1}^{m-1}$, the solution of the relaxed minimization problem (\ref{minim1}), (\ref{L1b}).  By Corollary~\ref{monotonicity2},  $f({\bf{n}}^{(m)})\leq f({\bf{x}}^{(m)})\leq \gamma$, so ${\bf n}^{(m)}$ satisfies (\ref{L1int}). 
Note that for the $n^{(m)}_j$'s we use the original index set $j=1, \dots, m$ of Section~\ref{minsupp}, while for the $x^{(m)}_j$'s we retain the labels $j=0, \dots, m-1$. 
In this section, we work with the specific integer sequence ${\bf n}^{(m)}=(n_2^{(m)}, \dots, n_m^{(m)})$ defined recursively by
\begin{equation}\label{minsubo}
n_{j+1}^{(m)}= \left\lceil n_{j}^{(m)}\frac{x^{(m)}_{j}}{x^{(m)}_{j-1}} \right\rceil, \ \ \ \ j=1,\dots, m-1,
\end{equation}
where $n^{(m)}_1\equiv x^{(m)}_0\equiv 1$ as above and $\lceil s\rceil$ denotes the smallest integer greater or equal to $s$.
This sequence is minimal amongst all integer sequences subordinate to ${\bf x}^{(m)}$ in the sense that
 if ${\bf k}=(k_2, \dots, k_m)$ is any  integer sequence such that $1\leq\frac{k_2}{x^{(m)}_1}\leq \dots \leq \frac{k_m}{x^{(m)}_{m-1}}$,
 then $k_{j}\geq n^{(m)}_j$ for all $j=2,\dots, m$. Indeed one has $k_2\geq x^{(m)}_1$, which implies $k_2\geq \lceil x^{(m)}_1\rceil=n^{(m)}_2$, and assuming by induction $k_j\geq n^{(m)}_j$, one obtains
\begin{equation}
 k_{j+1} =\left\lceil k_{j+1}\right\rceil \geq \left\lceil x^{(m)}_{j}\frac{k_{j}}{x^{(m)}_{j-1}} \right\rceil\geq \left\lceil x^{(m)}_{j}\frac{n^{(m)}_{j}}{x^{(m)}_{j-1}} \right\rceil=n^{(m)}_{j+1}.
\end{equation}

We will next derive an asymptotic formula, analogous to Proposition 
\ref{propasymp}(c), for the limit of $n^{(m)}_j$ as $m \to \infty$.

\begin{proposition} \label{propasymp-n}
Let $\sigma > \frac{5}{4}$. Then, for all $j \geq 1$,
\begin{equation} \label{lim-nj}
\lim_{m \to \infty} n^{(m)}_j = 1 + \lceil \sigma \rceil (j-1)^2.
\end{equation}
\end{proposition}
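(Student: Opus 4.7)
The plan is to prove by induction on $j$ that the integer sequence $(n_j^{(m)})_{m \ge j}$ is eventually constant at the value $L_{j-1}:=1+\lceil\sigma\rceil(j-1)^2$, which is equivalent to the stated limit. The base case $j=1$ is trivial since $n_1^{(m)}\equiv 1=L_0$.

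For the induction step, assume that for some $j\ge 1$ there is $M_j$ with $n_j^{(m)}=L_{j-1}$ for all $m\ge M_j$. The recursion (\ref{minsubo}) then gives
\[
n_{j+1}^{(m)}=\left\lceil L_{j-1}\,\frac{x_j^{(m)}}{x_{j-1}^{(m)}}\right\rceil \qquad (m\ge M_j).
\]
Setting $\bar r_j:=(1+\sigma j^2)/(1+\sigma(j-1)^2)=w_j/w_{j-1}$, Lemma~\ref{ljsubor} yields $x_j^{(m)}/x_{j-1}^{(m)}\le\bar r_j$ for every $m$, while Proposition~\ref{propasymp}(c) gives $x_j^{(m)}/x_{j-1}^{(m)}\to\bar r_j$ as $m\to\infty$. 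It therefore suffices to establish
\[
L_{j-1}\bar r_j\in(L_j-1,\,L_j],
\]
because then $L_{j-1}\,x_j^{(m)}/x_{j-1}^{(m)}$ lies in the same half-open interval for $m$ sufficiently large (its upper bound follows from the Lemma, its lower bound from convergence), forcing the ceiling to equal $L_j$.

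The arithmetic heart of the argument is to verify this membership. A direct expansion yields
\[
L_j-L_{j-1}\bar r_j=\frac{\delta\,(2j-1)}{1+\sigma(j-1)^2},\qquad \delta:=\lceil\sigma\rceil-\sigma\in[0,1),
\]
which already gives the upper bound $L_{j-1}\bar r_j\le L_j$. The strict lower bound amounts to showing $\delta(2j-1)<1+\sigma(j-1)^2$ for all $j\ge 1$. I would handle this by cases: for $j=1$ it is just $\delta<1$; for $j=2$ it reads $3\delta<1+\sigma$, which reduces exactly to the hypothesis $\sigma>5/4$ when $\lceil\sigma\rceil=2$, and is otherwise implied by $\sigma>2$ together with $\delta<1$; for $j\ge 3$ the elementary inequality $2j-1\le 1+(5/4)(j-1)^2$ combined with $\delta<1$ and $\sigma>5/4$ closes the estimate.

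The main obstacle — and the only non-routine step — is this last case analysis, specifically the case $j=2$, where the threshold $\sigma>5/4$ is tight: at $\sigma=5/4$ one has $L_1\bar r_1=L_2-1$, so the ceiling would collapse to $L_2-1$ and the claimed limit would fail. The role of Lemma~\ref{ljsubor} is also essential: without the one-sided bound $x_j^{(m)}/x_{j-1}^{(m)}\le\bar r_j$ valid for \emph{every} $m$, one could not guarantee that $L_{j-1}x_j^{(m)}/x_{j-1}^{(m)}\le L_j$ uniformly in $m$, and the ceiling could overshoot the target integer for some finite $m$.
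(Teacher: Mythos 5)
Your proof is correct and follows essentially the same approach as the paper: induction on $j$, the ceiling recursion controlled from above by Lemma~\ref{ljsubor} (giving $x_j^{(m)}/x_{j-1}^{(m)}\le\bar r_j$ for all $m$) and from below by Proposition~\ref{propasymp}(c), and the same algebraic identity reducing everything to $\delta(2j-1)<1+\sigma(j-1)^2$. The only cosmetic differences are that you treat integer and noninteger $\sigma$ uniformly (the paper handles $\delta=0$ separately by a squeeze between $x_j^{(m)}$ and $1+\sigma j^2$), and your case split for the final inequality is on $j=2$ versus $j\ge3$ whereas the paper's is on $\sigma>2$ versus $\tfrac54<\sigma<2$ via a tangent-line bound.
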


\begin{proof}

We will prove the statement by induction on $j$. The case $j=1$ holds by definition. Assume that \eqref{lim-nj} holds for some $j \geq 1$.

We first consider the case when $\sigma$ is an integer. 
Since $n^{(m)}_j$ is integer valued and $\lceil \sigma \rceil = \sigma$, \eqref{lim-nj} is equivalent to saying that
$n_j^{(m)} = 1 + \sigma (j-1)^2$  for all sufficiently large $m$. 
Now by Lemma \ref{ljsubor} 
(see \eqref{w-x-ratio}) we have 
\begin{equation}
1 + \sigma j^2 \geq (1+\sigma(j-1)^2) \frac{x_j^{(m)}}{x_{j-1}^{(m)}}
= n_j^{(m)} \frac{x_j^{(m)}}{x_{j-1}^{(m)}}
\end{equation}
which, together with \eqref{minsubo}, implies
$1 + \sigma j^2 \geq n_{j+1}^{(m)}$. Since $n_{j+1}^{(m)} \geq x_j^{(m)}$ and $x_j^{(m)} \to 1 + \sigma j^2$ (see
Proposition~\ref{propasymp}(c)), we obtain
$n_{j+1}^{(m)} \to 1 + \sigma j^2$, which completes the induction step.

Now assume $\sigma > \frac{5}{4}$ is noninteger.
 By Proposition~\ref{propasymp}(c) and the induction hypothesis, we have
\begin{equation} \label{lim-exp1}
\lim_{m \to \infty} n_{j}^{(m)}\frac{x^{(m)}_{j}}{x^{(m)}_{j-1}}
= (1+\lceil \sigma \rceil (j-1)^2) \frac{1+\sigma j^2}{1+\sigma (j-1)^2}
\end{equation}
Since the function $\lceil \cdot \rceil$ is continuous at every
noninteger, our problem reduces to showing that
\begin{equation} \label{equiv-exp}
\lceil \sigma \rceil j^2 < 
(1+\lceil \sigma \rceil (j-1)^2) \frac{1+\sigma j^2}{1+\sigma (j-1)^2}
< 1 + \lceil \sigma \rceil j^2
\end{equation}
for all $j \geq 1$ and $\sigma > \frac{5}{4}$, for then we have
\begin{equation}
1 + \lceil \sigma \rceil j^2 =
\left \lceil \lim_{m \to \infty} n_{j}^{(m)}\frac{x^{(m)}_{j}}{x^{(m)}_{j-1}}
\right \rceil
= \lim_{m \to \infty} 
\left \lceil n_{j}^{(m)}\frac{x^{(m)}_{j}}{x^{(m)}_{j-1}}
\right \rceil
= \lim_{m \to \infty} n_{j+1}^{(m)},
\end{equation}
which would complete the induction step. 

To show \eqref{equiv-exp}, first note that
\begin{equation} \label{algeb-iden}
(1+\lceil \sigma \rceil (j-1)^2) \frac{1+\sigma j^2}{1+\sigma (j-1)^2}
- \lceil \sigma \rceil j^2 = 1 - 
\frac{(\lceil \sigma \rceil - \sigma)(2j-1)}{1+\sigma (j-1)^2},
\end{equation}
which immediately yields the second inequality for all 
$j \geq 1$ and noninteger $\sigma > 0$. The first inequality in
\eqref{equiv-exp} is also immediate for $j=1$, as the right hand side of
\eqref{algeb-iden} reduces to $1 - (\lceil \sigma \rceil - \sigma)$,
which is strictly positive. 
Now note that 
\begin{equation}
1+\sigma (t-1)^2 \geq 2\sigma t + 1-3\sigma,\mbox{ for all }t \in \mathbb{R},
\end{equation}
since the right hand side is the equation of the line tangent to the
parabola $1 + \sigma(t-1)^2$ at $t=2$. Now,
for $\sigma > 2$, we have
\begin{equation}
\frac{(\lceil \sigma \rceil - \sigma) (2j-1)}{1+\sigma (j-1)^2}
< \frac{2j-1}{1+2 (j-1)^2} \leq \frac{2j-1}{4j-5} \leq 1, 
\mbox{ for all } j \geq 2.
\end{equation}
On the other hand, for $\frac{5}{4} < \sigma < 2$, we have
\begin{equation}
\frac{(\lceil \sigma \rceil - \sigma) (2j-1)}{1+\sigma (j-1)^2}
< \frac{\frac{3}{4}(2j-1)}{1+ \frac{5}{4}(j-1)^2} 
\leq \frac{\frac{3}{2}j-\frac{3}{4}}{\frac{5}{2}j-\frac{11}{4}}
\leq 1, \mbox{ for all }
j \geq 2.
\end{equation}
Hence the first inequality in \eqref{equiv-exp} holds for
all $j \geq 1$ and noninteger $\sigma > \frac{5}{4}$. 
\end{proof}

{\em Remark:}
For $j=1$ and $j=2$, we have $n^{(m)}_1 = 1$ and 
$n^{(m)}_2 = \lceil x^{(m)}_1 \rceil$ so that 
the formula \eqref{lim-nj} is actually valid for all 
$\sigma > 0$  because of Proposition \ref{propasymp}(b) and 
(c). On the other hand, for $j=3$ and $1 < \sigma 
< \frac{5}{4}$, we have $n^{(m)}_3 \to 8$ instead of $9$.
To see this, note that $n^{(m)}_2 \to 1 + \lceil \sigma \rceil = 3$, so that
\begin{equation} \label{lim-exp2}
\lim_{m \to \infty} n_{2}^{(m)}\frac{x^{(m)}_{2}}{x^{(m)}_{1}}
= \frac{3(1+4\sigma)}{1+\sigma} \in (7.5,8).
\end{equation}
Similarly, for $1 < \sigma < \frac{5}{4}$,
$n^{(m)}_4 \to 17$ instead of $19$. As shown in Figure~\ref{quant_sigma}, the pattern becomes
more complicated for higher values of $j$ and the interval has to be 
subdivided. A similar pattern is present also for $0 < \sigma < 1$.
We shall not explore this here; we again refer to Figure~\ref{quant_sigma}.

\begin{figure}[htbp] %  figure placement: here, top, bottom, or page
\centerline{\includegraphics[width=5.5in]{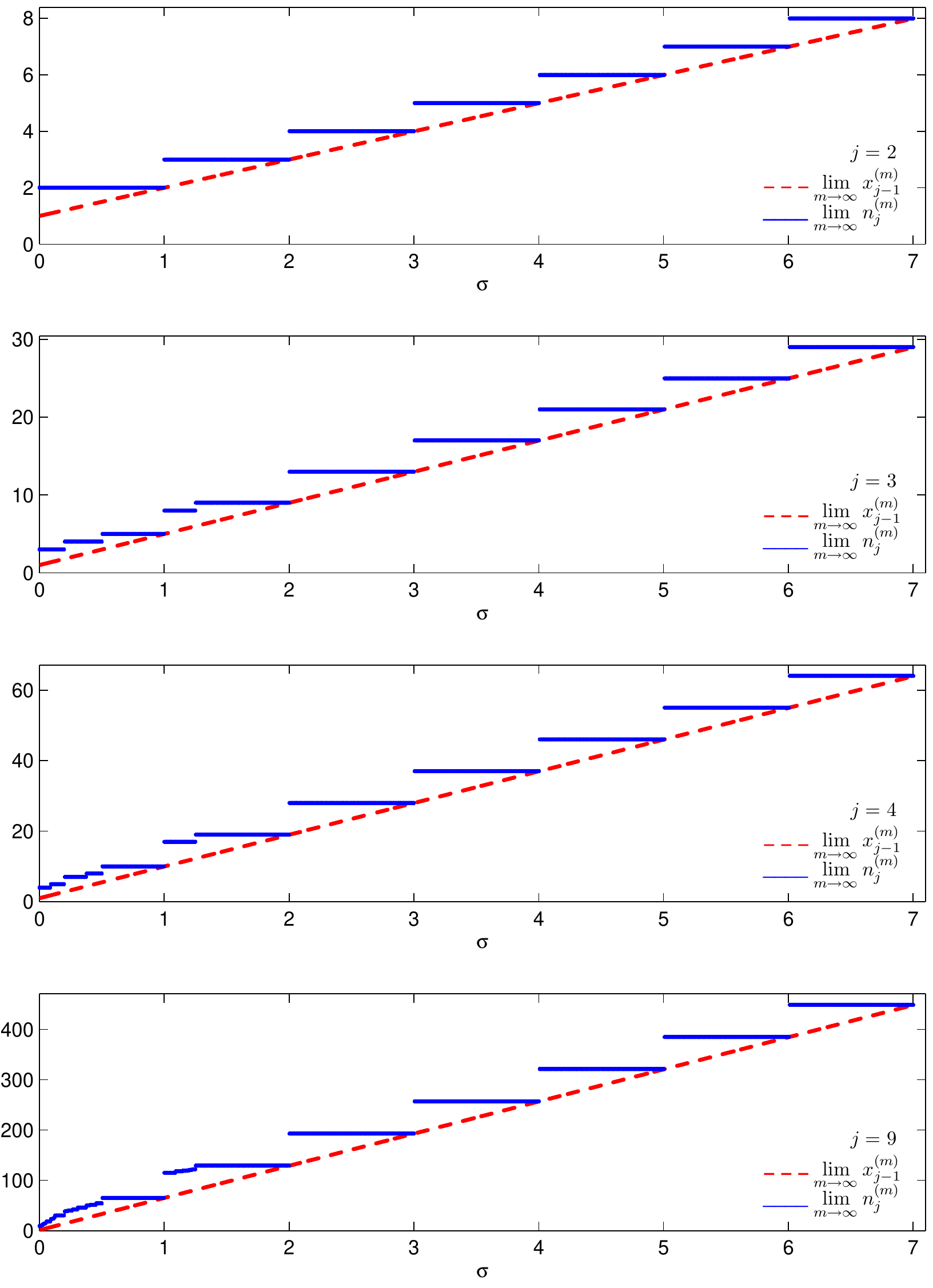}}
\caption{Numerical comparison of $\lim\limits_{m \to \infty} x_{j-1}^{(m)}$ 
with $\lim\limits_{m \to \infty} n_{j}^{(m)}$ as a function of 
$\sigma$ for $j=2,3,4,9$. For $j=2$,
these limits equal $1+\sigma$ and $1+\lceil \sigma \rceil$, respectively.
For $j\geq 3$, notice the difference between the cases $\sigma > 5/4$ and
$\sigma < 5/4$. In the first case, one has 
$\lim\limits_{m \to \infty} n_{j}^{(m)}=1+\lceil \sigma \rceil (j{-}1)^2$, whereas
in the latter case this formula is no longer valid.}
\label{quant_sigma}
\end{figure}

\begin{definition}
 A sequence of integer vectors ${\bf{k}}^{(m)}=(k_2^{(m)},\dots,k_m^{(m)})$ of increasing length $m-1$, $m=2, 3\dots$, with $f(k^{(m)})\leq \gamma$ is said to be {\em asymptotically optimal} if 
\begin{equation}
\lim_{m\rightarrow\infty}\left(\frac{\eta\left({\bf{k}}^{(m)}\right)}{\eta\left({\bf{x}}^{(m)}\right)}\right)^{1/m} =1,
\end{equation}
where ${\bf{x}}^{(m)}$ is the solution of (\ref{minim1}), (\ref{L1b}) as above. 
\end{definition}

The relevance of this definition lies in the fact that the quantity
$\lim\limits_{m \to \infty} \frac{1}{m^2} (\eta({\bf k}^{(m)}))^{1/m}$
controls the rate of exponential decay for the error bound associated
with the sequence of filters defined by $\big({\bf k}^{(m)}\big)$. The precise
relation will be seen in Theorem~\ref{decayrate} below.
We will use the following lemma to assess the asymptotic optimality of 
our minimal subordinate 
construction ${\bf n}^{(m)}$ defined in \eqref{minsubo}.

\begin{theorem}\label{thmasop}
If $\sigma$, defined in Proposition~\ref{propasymp}, is an integer, then the sequence ${\bf n}^{(m)}$, $m=2, 3, \dots$, defined by (\ref{minsubo}), is both asymptotically optimal and  subordinate to ${\bf{x}}^{(m)}$. If $\sigma$ is not an integer, no sequence of integer vectors can have both of 
these properties.
\end{theorem}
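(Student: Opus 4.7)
The first assertion has two components: subordination of $\mathbf{n}^{(m)}$ to $\mathbf{x}^{(m)}$, and asymptotic optimality. Subordination is immediate from the recursion \eqref{minsubo}: since $n_{j+1}^{(m)} \geq n_j^{(m)} x_j^{(m)}/x_{j-1}^{(m)}$, the ratios $n_{j+1}^{(m)}/x_j^{(m)}$ are non-decreasing in $j$, with initial value $n_1^{(m)}/x_0^{(m)}=1$. So the real content, for $\sigma\in\mathbb{N}$, is the estimate
\begin{equation*}
S_m \;:=\; \log\bigl(\eta(\mathbf{n}^{(m)})/\eta(\mathbf{x}^{(m)})\bigr) \;=\; \sum_{j=1}^{m-1}\log\bigl(n_{j+1}^{(m)}/x_j^{(m)}\bigr) \;=\; o(m).
\end{equation*}

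Write $r_j^{(m)}:=x_j^{(m)}/x_{j-1}^{(m)}$ and let $\theta_j:=n_{j+1}^{(m)}-n_j^{(m)}r_j^{(m)}\in[0,1)$ be the ceiling residual. The recursion \eqref{minsubo} gives $n_{j+1}^{(m)}-x_j^{(m)} = (n_j^{(m)}-x_{j-1}^{(m)})r_j^{(m)} + \theta_j$, which (starting from $n_1^{(m)}=x_0^{(m)}=1$) telescopes to
\begin{equation*}
\frac{n_{j+1}^{(m)}-x_j^{(m)}}{x_j^{(m)}} \;=\; \sum_{i=1}^{j}\frac{\theta_i}{x_i^{(m)}} \;\leq\; T_m \;:=\; \sum_{i=1}^{m-1}\frac{\theta_i}{x_i^{(m)}}.
\end{equation*}
Using $\log(1+t)\leq t$, $S_m\leq (m-1)T_m$, so it suffices to show $T_m\to 0$. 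The plan is a head--tail split. For the tail, $\sin(i\pi/(2m))\geq i/m$ on $[0,\pi/2]$ combined with the lower bound on $K$ in Proposition~\ref{propasymp}(a) yields $x_i^{(m)}\geq c\, i^2$ with $c>0$ independent of $m$, so $\sum_{i>N} 1/x_i^{(m)}<\varepsilon/2$ for $N$ sufficiently large, uniformly in $m$. For the head, Proposition~\ref{propasymp-n} gives $n_i^{(m)}\to 1+\sigma(i-1)^2$ and Proposition~\ref{propasymp}(c) gives $r_i^{(m)}\to (1+\sigma i^2)/(1+\sigma(i-1)^2)$, so both $n_i^{(m)}r_i^{(m)}$ and $n_{i+1}^{(m)}$ converge to the common integer $1+\sigma i^2$, forcing $\theta_i\to 0$. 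Hence $\sum_{i\leq N}\theta_i/x_i^{(m)}<\varepsilon/2$ for $m$ large, giving $T_m<\varepsilon$ and completing the bound.

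For the second assertion, let $\mathbf{k}^{(m)}=(k_2,\dots,k_m)$ be any integer vector subordinate to $\mathbf{x}^{(m)}$. Subordination gives $k_2\geq x_1^{(m)}$, and since $k_2\in\mathbb{Z}$ while $x_1^{(m)}\to 1+\sigma\notin\mathbb{Z}$ (Proposition~\ref{propasymp}(c)), we have $k_2\geq \lceil x_1^{(m)}\rceil = 1+\lceil\sigma\rceil$ for $m$ sufficiently large. Hence $k_2/x_1^{(m)}\to \rho:=(1+\lceil\sigma\rceil)/(1+\sigma)>1$. Subordination further forces $k_{j+1}/x_j^{(m)}\geq k_2/x_1^{(m)}$ for every $j$, so for any $\rho'\in(1,\rho)$ and $m$ large,
\begin{equation*}
\log\bigl(\eta(\mathbf{k}^{(m)})/\eta(\mathbf{x}^{(m)})\bigr) \;\geq\; (m-1)\log(k_2/x_1^{(m)}) \;\geq\; (m-1)\log\rho',
\end{equation*}
whence $\bigl(\eta(\mathbf{k}^{(m)})/\eta(\mathbf{x}^{(m)})\bigr)^{1/m}\geq\rho'>1$ in the limit, ruling out asymptotic optimality.

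The main technical step is the pointwise decay $\theta_i\to 0$ in the integer case. A priori $\theta_i$ could be as large as $1$; the hypothesis $\sigma\in\mathbb{Z}$ enters precisely here, ensuring that the common limit $1+\sigma i^2$ of $n_i^{(m)}r_i^{(m)}$ and $n_{i+1}^{(m)}$ is an integer, so that the ceiling closes up exactly in the limit. The second assertion shows this integrality hypothesis is not merely sufficient but necessary: once $\sigma\notin\mathbb{Z}$, even the very first ratio $k_2/x_1^{(m)}$ is bounded away from $1$, and subordination propagates that lower bound through every subsequent ratio, producing an $\Omega(m)$ term in $\log(\eta(\mathbf{k}^{(m)})/\eta(\mathbf{x}^{(m)}))$.
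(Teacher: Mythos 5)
Your proof is correct, and the hard direction (integer $\sigma$) takes a genuinely different route from the paper's. Both arguments begin from the same observation — the ceiling introduces an additive error of less than $1$ at each step — but they exploit it differently. The paper first sandwiches $x_j^{(m)}\leq n_{j+1}^{(m)}\leq w_j=1+\sigma j^2$ (using that ${\bf w}^{(m)}$ is a competing integer subordinate sequence and $\bf n^{(m)}$ is minimal), then splits at the $m$-dependent threshold $j=m^{2/3}$: on the head it uses a quantitative lower bound $x_j^{(m)}\geq(1+\sigma j^2)(1-C/m^{2/3})$ to get $n_{j+1}^{(m)}/x_j^{(m)}\leq 1+O(m^{-2/3})$, and on the tail it iterates the recursion $\frac{n_{j+1}^{(m)}}{x_j^{(m)}}\leq\frac{n_j^{(m)}}{x_{j-1}^{(m)}}+\frac{1}{x_j^{(m)}}$. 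This gives the explicit rate $(\eta(\mathbf{n}^{(m)})/\eta(\mathbf{x}^{(m)}))^{1/m}\leq 1+O(m^{-2/3})$. You instead telescope the residuals $\theta_j$ exactly into the identity $\frac{n_{j+1}^{(m)}}{x_j^{(m)}}=1+\sum_{i\leq j}\theta_i/x_i^{(m)}$, and split at a \emph{fixed} threshold $N$: the tail is handled by $\theta_i\leq 1$ together with $x_i^{(m)}\gtrsim i^2$, and the head is handled by the pointwise limit $\theta_i\to 0$, which follows by invoking the conclusion of Proposition~\ref{propasymp-n} (that $n_i^{(m)}\to 1+\sigma(i-1)^2$) and Proposition~\ref{propasymp}(c). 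This is cleaner and more conceptual, and makes very transparent exactly where the integrality of $\sigma$ is used (in forcing the limit $1+\sigma i^2$ to be an integer so that the ceiling gap closes). What you give up is the explicit rate: by leaning on the \emph{qualitative} limit in Proposition~\ref{propasymp-n} your argument produces $T_m\to 0$ without a quantified speed, whereas the paper, which avoids invoking Proposition~\ref{propasymp-n} for the integer case and works directly with the $w_j$-comparison, yields $O(m^{-2/3})$. Your treatment of the non-integer case is essentially the paper's argument, presented slightly more carefully (you keep the exponent $(m-1)/m$ explicit rather than absorbing it into a $\limsup$).
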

\begin{proof}
If $\sigma$ is not an integer, then $\lim\limits_{m\rightarrow\infty} x_1^{(m)}=1+\sigma$ is  not an integer, and so for any sequence ${\bf k}^{(m)}=(k_2^{(m)},\dots,k_m^{(m)})$, $m=2,3,\dots$, of integer vectors subordinate to ${\bf x}^{(m)}$,
\begin{equation}\label{lim-ratio-non-int}
 \limsup\limits_{m\rightarrow\infty}\left(\frac{\eta({\bf k})}{\eta({\bf x})}\right)^{1/m}\geq \limsup\limits_{m\rightarrow\infty}\frac{k_2^{(m)}}{x_1^{(m)}}\geq \lim\limits_{m\rightarrow\infty}\frac{\lceil x_1^{(m)}\rceil}{{x_1^{(m)}}}=\frac{1+\lceil \sigma \rceil}{1+\sigma}>1.
\end{equation}
Hence ${\bf k}^{(m)}$ cannot be asymptotically optimal.

Now consider the case that $\sigma$ is an integer. Then by Lemma~\ref{ljsubor}, ${\bf w}^{(m)}= (w_1, \dots, w_{m-1})$ 
is an integer sequence subordinate to ${\bf x}^{(m)}$. As in Proposition \ref{propasymp-n}, 
one has $x_j^{(m)}\leq n_{j+1}^{(m)}\leq w_j$, as ${\bf n}^{(m)}$ is the minimal integer sequence subordinate to ${\bf x}^{(m)}$.

Next, from Proposition~\ref{propasymp}(a) we see that
\begin{equation}
\frac{2K}{m^2}\geq \frac{4s}{\pi^2}\left(1-\frac{C_1}{m}\right) \label{Klowb}
\end{equation}
for some constant $C_1<\infty$.
Together with the elementary fact that $\left(\frac{\sin x}{x}\right)^2 \geq 1-C_2 x^2$ for a sufficiently large constant $C_2$, this implies that for  $1\leq j \leq m^{2/3}$ and some constant $C_3 < \infty$
\begin{equation}
x_j^{(m)}=1+2 K \sin^2\left(\frac{j\pi}{2m}\right) \geq 1 + \sigma\left(1-\frac{C_1}{m}\right) \left(j^2 \left(1-C_2\frac{\pi^2}{4}\frac{j^2}{m^2}\right)\right)\geq (1+\sigma j^2)\left(1-\frac{C_3}{m^{2/3}}\right).
\end{equation}
Then for $1\leq j \leq m^{2/3}$ and some $C_4<\infty$
\begin{equation}
\frac{{n_{j+1}}^{(m)}}{{x_j}^{(m)}}\leq\frac{{w_j}}{{x_j}^{(m)}}
\leq \frac{1}{1-\frac{C_3}{m^{2/3}}}\leq 1+ \frac{C_4}{m^{2/3}}.
\end{equation}
Now
\begin{equation}
\frac{{n_{j+1}}^{(m)}}{{x_j}^{(m)}}=\frac{1}{{x_j}^{(m)}}\left\lceil n_{j}^{(m)}\frac{x^{(m)}_{j}}{x^{(m)}_{j-1}} \right\rceil\leq \frac{1}{{x_j}^{(m)}}  \left(n_{j}^{(m)}\frac{x^{(m)}_{j}}{x^{(m)}_{j-1}} +1\right) = \frac{n^{(m)}_{j}}{x^{(m)}_{j-1}} + \frac{1}{{x_j}^{(m)}}.\label{njxjrec}
\end{equation}
Combining (\ref{Klowb}) together with the elementary lower bound $\frac{\sin x}{x}\geq \frac{2}{\pi}$ for $0\leq x \leq \frac{\pi}{2}$, we obtain
$ x_j^{(m)}\geq C_5 j^2$, $j\geq 1$, for some constant $C_5>0$.
By repeated application of (\ref{njxjrec}), one then obtains for $m^{2/3}<j\leq m-1$ and some constant $C_6<\infty$
\begin{equation}
\frac{n_{j+1}^{(m)}}{x_j^{(m)}}\leq \frac{n^{(m)}_{j}}{x^{(m)}_{j-1}} + \frac{1}{C_5 j^2}\leq \dots
\leq \frac{n^{(m)}_{\lfloor m^{2/3}\rfloor}+1}{x^{(m)}_{\lfloor m^{2/3}\rfloor}} +\sum\limits_{l=\lfloor m^{2/3}\rfloor+1}^j \frac{1}{C_5 l^2}\leq 1+ \frac{C_4}{m^{2/3}} +\frac{C_6}{m^{2/3}}.
\end{equation}
Thus  there exists some constant $C_7<\infty$, such that for all $1\leq j\leq m-1$, 
\begin{equation}
\frac{{n_{j+1}}^{(m)}}{{x_j}^{(m)}}\leq 1 +\frac{C_7}{m^{2/3}}.
\end{equation}
We conclude that 
\begin{equation}
1\leq \frac{\eta({\bf n}^{(m)})}{\eta({\bf x}^{(m)})} 
= \prod\limits_{j=1}^{m-1}\frac{n_{j+1}^{(m)}}{{x_j}^{(m)}}
\leq \left(1 +\frac{C_7}{m^{2/3}}\right)^{m},
\end{equation}
which implies that 
\begin{equation}
\lim\limits_{m\rightarrow\infty}  \left(\frac{\eta({\bf n}^{(m)})}{\eta({\bf x}^{(m)})}\right)^{1/m} = 1,
\end{equation}
and hence ${\bf n}^{(m)}$ is asymptotically optimal. 
\end{proof}

{\em Remark:} For noninteger values of $\sigma$,
we do not know if there are asymptotically optimal 
integer vectors ${\bf k}^{(m)}$ (which are necessarily not 
subordinate to ${\bf x}^{(m)}$) that are 
admissible, i.e., that satisfy $f({\bf k}^{(m)}) \leq \gamma
= \cosh (\pi/\sqrt{\sigma})$. 
At the same time, it is natural to ask how close our 
construction ${\bf n}^{(m)}$ is to being
asymptotically optimal, i.e., the value of the
limit, as $m\to \infty$, of  $ (\eta({\bf n}^{(m)})/\eta({\bf x}^{(m)}))^{1/m}$. 
We shall not carry out 
an analysis here that is similar to Proposition \ref{propasymp-n}
to find this limit, 
but instead provide our numerical findings in Figure \ref{comparison}.\\

\begin{figure}[htb] %  figure placement: here, top, bottom, or page
\centerline{\includegraphics[width=5in]{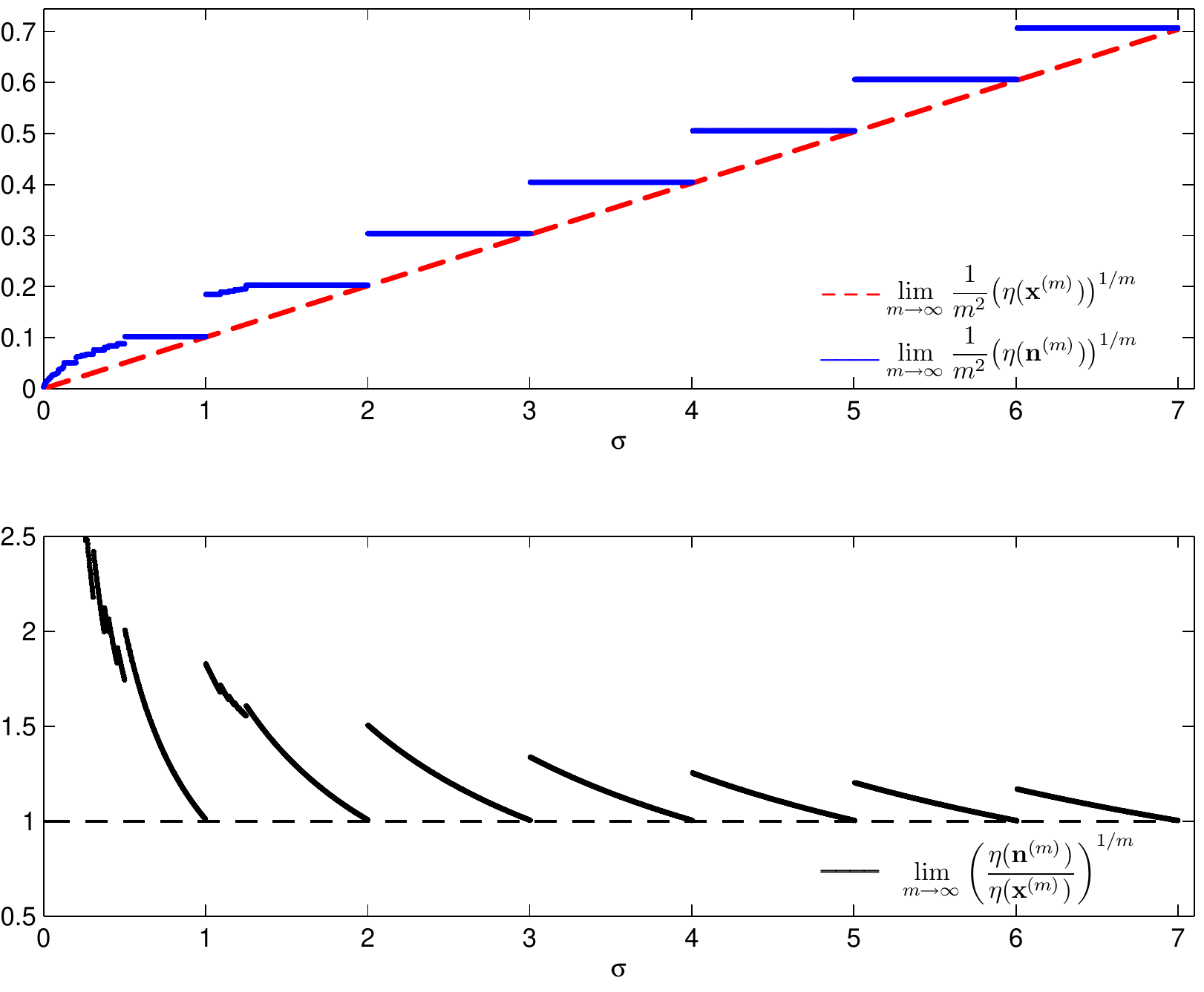}}
\caption{Numerical comparison of $\lim\limits_{m\to \infty} \frac{1}{m^2} 
\big(\eta({\bf n}^{(m)})\big)^{1/m}$ with $\lim\limits_{m\to \infty} 
\frac{1}{m^2} \big(\eta({\bf x}^{(m)})\big)^{1/m} = \sigma/\pi^2$, as a function of $\sigma$.
Top: plotted individually, bottom: their ratio.}
\label{comparison}
\end{figure}

\subsection*{Optimal exponential error decay for minimally supported
filters}

\par We are now ready to prove the promised improved
exponential error decay
estimate for the $\Sigma\Delta$ modulators defined by ${\bf n}^{(m)}$.

\ignore{
Figure~\ref{nxwgraph} illustrates the fact that for $\sigma$ non-integer, the resulting sequence ${\bf n}^{(m)}$ is not asymptotically optimal. As an example,  we consider the case $\gamma=1.5$, corresponding to $\sigma\approx 10.66$. It turns out that for $m\geq 18$, one has $n_2>w_1$, and so the above argument breaks down. The plot in Figure~\ref{nxwgraph} compares, for the smallest such order $m=18$, the case $\gamma=1.5$ with the case corresponding to the (next larger) integer value $\sigma=11$, i.e., $\gamma\approx 1.48$. The values of the $n_j$'s arising in these two cases differ by at most $1$, so they are indistinguishable in the plot. Hence, allowing for $\gamma=1.5$ instead of $\gamma=1.48$ leads to almost no reduction of $\eta({\bf n})^{1/m}$, although it does lead to a significant reduction of $\eta({\bf x})^{1/m}$. This corresponds to the fact that only in the latter case, one has asymptotic optimality.

\begin{figure}[htbp] %  figure placement: here, top, bottom, or page
   \vspace{-0.4in}
   \includegraphics[width=5.7in]{nxw7.pdf}
   \vspace{-0.4in}
   \includegraphics[width=5.7in]{nxwzoomin3.pdf}
   \caption[The minimizer $\bf x$ of the relaxed problem for $m=18$ and $\gamma=1.5$ and the corresponding minimal subordinate integer sequence $\bf n$]{The $x_j$'s and $n_j$'s  as a function of $j$ for $m=18$ and $\gamma =1.5$ or $\sigma=11$, respectively. For comparison, the $w_j$'s for $\gamma=1.5$ are included in the plot. The second plot enlarges the dashed box in the first plot, showing that when $\gamma=1.5$ and hence $\sigma$ is not an integer, one has $n_j>w_j$ for small $j$.  }
   \label{nxwgraph}
\end{figure}
}

\begin{theorem}\label{decayrate}
For all $1<\gamma<2$ such that $\sigma=\frac{\pi^2}{\left(\cosh^{-1} \gamma\right)^2}$ is an integer, all one-bit $\Sigma\Delta$ modulators corresponding to filters $h^{(m)}$ minimally supported at positions $1, n_2^{(m)}, \dots, n_m^{(m)}$ are stable for all input sequences $y$ with $\|y\|_\infty\leq\mu = 2-\gamma$. Furthermore, the family consisting of the one-bit $\Sigma\Delta$ modulators corresponding to the filters $\left\{h^{(m)}\right\}_{m=2}^\infty$ for all orders $m$ gives rise to exponential error decay: For any rate constant $r<r_0:=\frac{\pi}{e^2 \sigma \ln 2}$, there exists a constant $C=C(r)$ such that 
\begin{equation}
\|e_\lambda\|_\infty\leq C 2^{-r\lambda}.
\end{equation}
\end{theorem}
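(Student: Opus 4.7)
The plan is to combine three ingredients: the stability of the greedy rule for the minimally supported filters $h^{(m)}$ (coming from the constraint $\|h^{(m)}\|_1 \leq \gamma$), the general error bound \eqref{err2}, and the asymptotic formula for $\eta({\bf n}^{(m)})^{1/m}/m^2$ provided by Proposition~\ref{propasymp}(d) together with Theorem~\ref{thmasop}. We then optimize the resulting order-$m$ error bound over $m$ as a function of $\lambda$.

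First, to prove stability: since $\sigma$ is an integer, Theorem~\ref{thmasop} tells us that ${\bf n}^{(m)}$ is subordinate to ${\bf x}^{(m)}$. By Corollary~\ref{monotonicity2} (applied to $f$), subordination yields $f({\bf n}^{(m)}) \leq f({\bf x}^{(m)}) = \gamma$, i.e.\ $\|h^{(m)}\|_1 \leq \gamma = 2-\mu$. Hence the admissibility condition \eqref{hmucond} holds for every $y$ with $\|y\|_\infty \leq \mu$, and by the discussion following \eqref{greedy_Q} the greedy rule produces $v$ with $\|v\|_\infty \leq 1$. Setting $u := g^{(m)} * v$ via the factorization \eqref{factor} then gives a bounded solution of \eqref{sig_del}, so $(\sM_m,m)$ is stable for $\sY_\mu$, and \eqref{uvg} yields $U(\sM_m,m,\sY_\mu) \leq \|g^{(m)}\|_1$. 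Identity \eqref{geta} evaluates this as $\|g^{(m)}\|_1 = \eta({\bf n}^{(m)})/m!$ (recall $n_1^{(m)}=1$).

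Next, for the error decay: inserting the above into \eqref{err2} yields
\begin{equation*}
\|e_\lambda\|_\infty \;\leq\; \frac{\eta({\bf n}^{(m)})}{m!}\,\|\varphi_0\|_1\,\bigl(\pi(1+\epsilon)\bigr)^m \lambda^{-m}.
\end{equation*}
Proposition~\ref{propasymp}(d) combined with the asymptotic optimality from Theorem~\ref{thmasop} gives
\begin{equation*}
\lim_{m\to\infty} \frac{\bigl(\eta({\bf n}^{(m)})\bigr)^{1/m}}{m^2} \;=\; \frac{\sigma}{\pi^2},
\end{equation*}
and Stirling's formula gives $(m!)^{1/m} \sim m/e$. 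Hence for any $\delta>0$ there is a constant $C_\delta$ such that $\|g^{(m)}\|_1 \leq C_\delta \bigl(\tfrac{\sigma e(1+\delta)}{\pi^2}\bigr)^m m^m$ for all $m \geq 1$. The error bound becomes
\begin{equation*}
\|e_\lambda\|_\infty \;\leq\; C_\delta\,\|\varphi_0\|_1\,\Bigl(\tfrac{\sigma e (1+\epsilon)(1+\delta)}{\pi}\Bigr)^m\, m^m\, \lambda^{-m}.
\end{equation*}
Applying \eqref{optim_ineq} with $\alpha := \tfrac{\pi \lambda}{\sigma e (1+\epsilon)(1+\delta)}$ and optimizing over $m$ gives
\begin{equation*}
\|e_\lambda\|_\infty \;\lesssim\; \exp\!\Bigl(-\tfrac{\pi\lambda}{\sigma e^{2}(1+\epsilon)(1+\delta)}\Bigr) \;=\; 2^{-r(\epsilon,\delta)\lambda},
\end{equation*}
where $r(\epsilon,\delta) = \tfrac{\pi}{\sigma e^2 (1+\epsilon)(1+\delta)\log 2}$. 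Given any target $r < r_0 = \tfrac{\pi}{\sigma e^2 \log 2}$, we pick $\epsilon,\delta$ small enough that $r < r(\epsilon,\delta)$, and absorb the finite-$m$ deviation from the asymptotic into the constant $C(r)$.

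The main technical obstacle is the transition from the \emph{asymptotic} statement of Proposition~\ref{propasymp}(d) to a \emph{uniform} bound $\|g^{(m)}\|_1 \leq C_\delta \bigl(\tfrac{\sigma e(1+\delta)}{\pi^2}\bigr)^m m^m$ valid for all $m$, since $m$ has to be chosen as a function of $\lambda$; this is what forces the introduction of $(1+\delta)$ and, together with the $(1+\epsilon)$ coming from the oversampling of the reconstruction kernel, produces the strict inequality $r < r_0$ rather than equality. The rest is a routine application of \eqref{optim_ineq}.
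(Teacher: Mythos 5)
Your proof is correct and follows essentially the same strategy as the paper's: stability from subordinacy of ${\bf n}^{(m)}$ to ${\bf x}^{(m)}$, the error bound from \eqref{err2} together with $\|g^{(m)}\|_1 = \eta({\bf n}^{(m)})/m!$, the asymptotic rate from Theorem~\ref{thmasop} and Proposition~\ref{propasymp}(d), and optimization over $m$ via \eqref{optim_ineq}. The only cosmetic difference is that you absorb the $(1+o(1))^m$ factor into a uniform constant $C_\delta$ (introducing a second slack parameter $\delta$), whereas the paper instead restricts the minimization to $m \geq M(r)$ with the single slack $1+\epsilon < \sqrt{r_0/r}$; both handle the same technical point.
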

\begin{proof}
Stability follows from the fact that ${\bf n}^{(m)}$ is subordinate to ${\bf x}^{(m)}$, which satisfies the stability condition $f({\bf x}^{(m)})\leq \gamma$. 

Choose the reconstruction kernel $\varphi_0$ such that the corresponding $\epsilon$ as introduced in Section~\ref{basicerror} satisfies $1+\epsilon < \sqrt{\frac{r_0}{r}}$. 
Now let $g^{(m)}$ be such that $\Delta^m g^{(m)} =\delta^{(0)}- h^{(m)}$, as in Section~\ref{optimal}. Then from (\ref{err2}) and (\ref{uvg}), we have the error bound
\begin{equation}\label{initial-error-1}
\|e_\lambda\|_\infty \leq \|g^{(m)}\|_1 \|v\|_\infty \|\varphi_0\|_1 \pi^ m (1+\epsilon)^m \lambda^{-m},
\end{equation}
where $v$ solves (\ref{rec_eq}).

Recall that our construction yields $\|v\|_\infty\leq 1$.
Furthermore, by Theorem~\ref{thmasop} and Proposition~\ref{propasymp}~(d), we have that
   \begin{equation}\label{lim-eta}
 \lim_{m\rightarrow\infty} \frac{\left(\eta({\bf n}^{(m)})\right)^{1/m}}{m^2}= \frac{1}{(\cosh^{-1}\gamma)^2}
= \frac{\sigma}{\pi^2}
  \end{equation}
 and hence by (\ref{geta})
 \begin{equation}
 \|g^{(m)}\|_1 = \frac{\eta({\bf n}^{(m)})}{m!} = \left(\frac{e\sigma}{\pi^2}\right)^m m^m \left(1+o(1)\right)^m.
 \end{equation}

Now consider $m\geq M(r)$ large enough to ensure that the $(1+o(1))$-factor is less than $\sqrt{\frac{r_0}{r}}$. Then
\eqref{initial-error-1} implies
\begin{equation}
\|e_\lambda\|_\infty \leq  \|\varphi_0\|_1  \left(\frac{ e\sigma}{\pi}\right)^m m^m \left(\frac{r_0}{r}\right)^m 
\lambda^{-m}.
\end{equation}
As explained in Section~\ref{optimal}, we choose, for each $\lambda$, the filter $h^{(m)}$ that leads to the minimal error bound. Then by a slight variation of (\ref{optim_ineq}), we obtain
 \begin{equation}\label{final-error}
\|e_\lambda\|_\infty \leq \|\varphi_0\|_1 \min\limits_{m\geq M(r)}  \left(\frac{e \sigma}{\pi}\right)^m m^m  \left(\frac{r_0}{r}\right)^m\lambda^{-m} \lesssim_r \exp\left({-\frac{\pi}{e^2 \sigma} \frac{r}{r_0} \lambda}\right)= 2^{-r\lambda},
\end{equation}
which proves the theorem.
 \end{proof}
 {\em Remark:} The smallest integer $\sigma$ such that the stability 
constraint $\|h\|_1\leq \gamma$ is satisfied for some $\gamma<2$ is $\sigma=6
$. In this case, Theorem~\ref{decayrate} yields exponential error decay 
for any rate constant $r<r_0\approx 0.102$. This is the fastest error decay 
currently known to be achievable for one-bit 
$\Sigma\Delta$ modulation. The previously 
best known bound for the achievable rate constant was $r_0\approx 0.088$~
\cite{Kra07}.

\section{Multi-level quantization alphabets and the case of
small $\sigma$}
\label{multi-level-section}

In this paper we have primarily considered one-bit $\Sigma\Delta$
modulators, though our theory and analysis is equally applicable
to quantization alphabets that consist of more than two levels.
Let us consider a general alphabet $\sA = \sA_L$ with $L$ levels 
such that
consecutive levels are separated by $2$ units as before. For instance,
$\sA_4 = \{-3, -1, 1, 3\}$ and $\sA_5 = \{-4, -2, 0, 2, 4\}$.
The corresponding generalized 
stability condition for the greedy quantization rule 
is then
\begin{equation}
 \|h\|_1 + \|y\|_\infty \leq L,
\end{equation}
which still guarantees the bound $\|v\|_\infty \leq 1$
(see, e.g., \cite[p. 104]{SchTe04}). 
In terms of our filter design and optimization problem, the parameter $\gamma$ that bounds $\|h\|_1$ can be set as large as $L$. 
The potential benefit of increased number of levels is that 
large values of $\gamma$ correspond to small values of $\sigma$,
which in turn yields faster exponential error decay rates
in the oversampling ratio $\lambda$
as given in \eqref{final-error}. There is, however, also 
an increase in the number of bits spent per sample which needs to
be accounted for if a rate-distortion type performance 
analysis is to be carried out.

As seen above
(Theorem \ref{thmasop} and Figure \ref{comparison}),
our analytical results are available for integer
values of $\sigma$ only. We will  
evaluate the performance of multi-level
quantization alphabets based on these values first. 
For each positive integer $\sigma$, we can find the minimum integer value of
$L \geq 2$ such that $\cosh (\pi/\sqrt{\sigma}) < L$. Alternatively,
if $L \geq 2$ is specified first, we 
find the minimum integer value of $\sigma$ that satisfies
this condition. Given such a $(\sigma,L)$ pair, the corresponding
bound $\|y \|_\infty$ on the input signal is
$L - \cosh(\pi/\sqrt{\sigma})$. We also compute the
achievable exponential error decay rate $r_0$ given by 
$\frac{\pi}{e^2 \sigma \ln 2}$, 
the average number of quantizer bits per sample
$B := \log_2 L$, and a coding efficiency figure
given by $r_0/B$. The results are tabulated in 
Table \ref{multi-level-comparison}. It turns out that 
the smallest value of $L$
which results in $\sigma = 1$ is $L=12$. This case also 
yields the highest coding efficiency figure given by $0.171$. It
also yields a significantly more favorable range for the input signal
compared to the one-bit case ($L = 2$).
As before, it is easy to check via Kolmogorov entropy bounds that 
the coding efficiency is always bounded by $1$. The cases $L=4$ and
$L=5$ are also noteworthy for they provide better overall performance,
yet still with a small quantization alphabet.

\begin{table}
\begin{center}
 \begin{tabular}{c|c|c|c|c|c}
$L$ & 2 & 3 & 4 & 5 & 12 \\ \hline
(average) number of bits/sample $B = \log_2 L$
& 1 & 1.585 & 2 & 2.322 & 3.585 \\ \hline
minimum integer value of $\sigma$ & 6 & 4 & 3 & 2 & 1 \\ \hline
maximum input signal $\|y\|_\infty$ & 0.058 &  0.490 & 0.851 & 0.335 & 0.408 \\ \hline
achievable error decay rate $r_0 = \pi(e^2 \sigma \ln 2)^{-1}$ & 0.102 & 0.153 & 0.204 & 0.306 & 0.613 \\ \hline
coding efficiency $r_0/B$ & 0.102 & 0.097 & 0.102 & 0.132 & 0.171
 \end{tabular}
\end{center}
\caption{Comparison of exponential error decay rates and
their coding efficiency for multi-level quantization alphabets. }
\label{multi-level-comparison}
\end{table}

It is natural to ask what can be said for 
the case of noninteger values of $\sigma$, especially as $\sigma
\to 0$. For our minimal subordinate constructions, we employ
the numerically computed values of 
$\lim_{m\to \infty} \frac{1}{m^2} 
\big(\eta({\bf n}^{(m)})\big)^{1/m}$ given in Figure \ref{comparison}
as a function of the continuous parameter $\sigma$.
As in Theorem \ref{decayrate},
the achievable error decay rate $r_0$ is given by 
\begin{equation}
 r_0 = \frac{1}{\pi e^2 \ln 2} 
\lim_{m\to \infty} \frac{m^2}{\big(\eta({\bf n}^{(m)})\big)^{1/m}}.
\end{equation}
On the other hand the smallest
number of levels $L$ of the quantizer 
which guarantees stability for the greedy rule is given by 
\begin{equation}
 L := \lceil \cosh (\pi / \sqrt{\sigma}) \rceil. 
\end{equation}
Finally, the coding efficiency as a function of $\sigma$ is
\begin{equation}\label{coding-efficiency}
 \frac{r_0}{B} = 
\frac{1}{\pi e^2 \ln \lceil \cosh (\pi / \sqrt{\sigma}) \rceil
} 
\lim_{m\to \infty} \frac{m^2}{\big(\eta({\bf n}^{(m)})\big)^{1/m}}.
\end{equation}
We plot our numerical findings in Figure \ref{efficiency}. The
specific values reported in Table \ref{multi-level-comparison}
are visible at the integer values $\sigma = 1,2,3,4,6$. It is 
interesting that the coding efficiency figure $0.171$ reported for
$\sigma = 1$ seems to be the global maximum value achievable by our
construction over all values of $\sigma$. This is possibly the
best performance achievable by minimally supported filters
that are constrained by subordinacy. We do not know if this figure
can be exceeded without the subordinacy constraint. 
The case of arbitrary optimal
filters (i.e., not necessarily minimally supported) 
is also open, along with the case of 
quantization rules that are more general than the
greedy rule. 

\begin{figure}[t] %  figure placement: here, top, bottom, or page
\centerline{\includegraphics[width=5in]{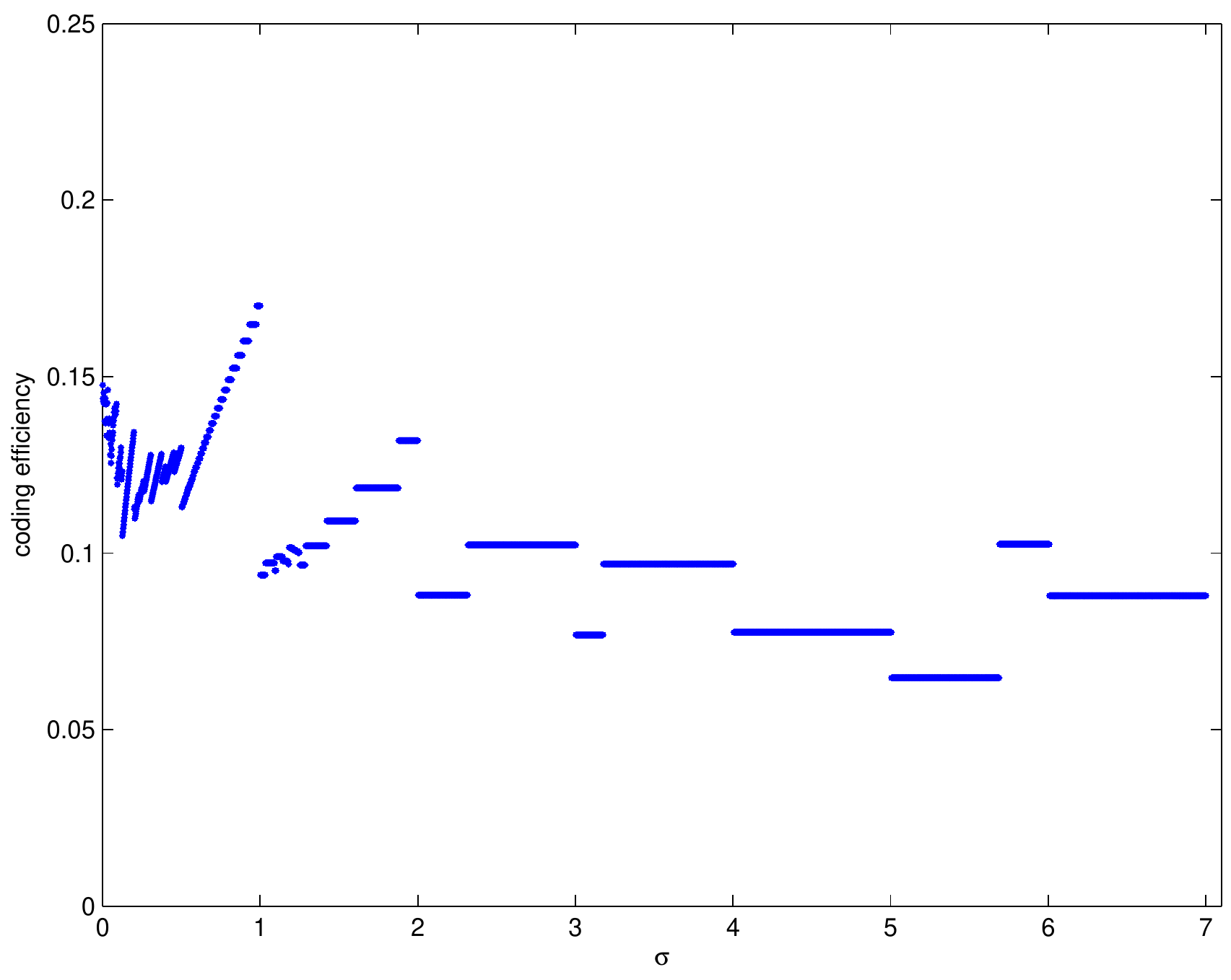}}
\caption{Numerical computation of the coding efficiency formula
given by \eqref{coding-efficiency}
for the minimal subordinate filters ${\bf n}^{(m)}$.}
\label{efficiency}
\end{figure}

%I find that for L=4, one can take \sigma=3 and for L=8, one can take
%\sigma=2. (I checked these a bit quickly, though.) Things get quite
%complicated later on as it is not clear what are asymptotically optimal
%filters. In particular, our numerical finding that n_j behaves quite
%differently from x_j as \sigma -> 0 is consistent with Kolmogorov %entropy
%bounds. (I can elaborate on this more.)

%Anyway, the optimal filter design problem for large \gamma (small \sigma) is
%still open and to be tackled and it doesn't seem easy. I should stress that
%there is growing tendency in engineering practice to use more than 2 levels
%these days (2 or 3-bit sigma-delta is becoming common) because stability is
%easier to achieve. Although our paper is about 1-bit sigma-delta, I think it
%might be a good idea to mention this as a remark at the end. I can do this
%quickly if there is interest.

%%%%%%%%%%%%%%%%%%%%%%%%%%%%%%%%%%%%%%%%%%%%%%%%%%%%%%%%%%%%%%%%

\section*{Acknowledgements}
The work in this paper was supported in part by various grants and fellowships:
National Science Foundation Grants DMJ-0500923 (Deift), CCF-0515187 (G\"unt\"urk),
Alfred P. Sloan Research Fellowship (G\"unt\"urk),
the Morawetz Fellowship at the Courant Institute (Krahmer), the Charles M. Newman Fellowship at the Courant Institute (Krahmer)
and a NYU GSAS Dean's Student Travel Grant (Krahmer).

%%%%%%%%%%%%%%%%%%%%%%%%%%%%%%%%%%%%%%%%%%%
\appendix
\section{Some useful properties of Chebyshev polynomials}
Recall that the Chebyshev Polynomials of the first and second kind in $x=\cos\theta$ are given by
\begin {equation}
T_m(x)= \cos{m\theta}\ \ \ \ \ \ \ \text{and}\ \ \ \ \ \ \ U_m(x)= \frac{\sin(m+1)\theta}{\sin{\theta}},
\label{chebydef}
\end {equation}
respectively.
The Chebyshev polynomials have, in particular, the following properties (see \cite{szego39}, \cite{borerd95}):
 \begin{itemize}
\item $T_m'(x) = m U_{m-1}(x)$,
\item The zeros of $U_{m-1}$ are $z_j = \cos\left(\frac{m-j}{m}\pi \right)$, $j=1,\dots, m-1$,
\item For $m>0$, the leading coefficient of $T_m$ is $2^{m-1}$,
\item The Chebyshev polynomials satisfy the following identities
\begin{equation}
 T_m(\cosh\tau) = \cosh(m\tau), \ \ \ U_m(\cosh\tau)=\frac{\sinh(m\tau)}{\sinh\tau},
 \label{coshcheby}
\end{equation}
\item The Chebyshev polynomials satisfy the differential equation
\begin{equation}
(1-x)^2 T_m''(x)-x T_m'(x) +m^2 T_m(x)=0 \label{cpdiffeq}.
\end{equation}
\end{itemize}

We say that a  polynomial $p$ of degree $m$ has the {\em equi-oscillation property} on $[-1,1]$ (compare \cite{borerd95})  if it has $m-1$ real critical points $\zeta_1,\dots, \zeta_{m-1}$ which satisfy
\begin{equation}
 \zeta_0:=-1<\zeta_1<\dots< \zeta_{m-1}<\zeta_m:=1
\end{equation}
  such that the associated values are alternating 
\begin{equation}
 p(\zeta_j)=(-1)^{m-j}
\end{equation}
for $j=0,\dots, m$.

Note that if a polynomial has the equi-oscillation property then its leading coefficient is positive. The Chebyshev polynomials of the first kind $T_m$ have the equi-oscillation property for all $m$.  Indeed, the first two properties given above imply that the $z_j$'s are the critical points of $T_m$, and a simple calculation shows that $T_m(z_j)=(-1)^{m-j}$. The equi-oscillation property in fact characterizes the Chebyshev polynomials of the first kind:
\begin{proposition}\label{equosc}
  If $p(s)$ is a polynomial of degree $m$ in $s$ with the equi-oscillation property on $[-1,1]$, then $p=T_m$.
\end{proposition}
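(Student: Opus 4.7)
The plan is to extract from the equi-oscillation data a first-order ODE satisfied by $p$ on $[-1,1]$, reduce it via the substitution $s=\cos\theta$ to a harmonic oscillator equation, and then identify the resulting trigonometric solution as $T_m$ using the polynomial nature of $p$.

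First, I would show that $|p(s)|\leq 1$ on $[-1,1]$. Since $p$ has degree $m$ with $m-1$ critical points in $(-1,1)$, the polynomial $p'$ (of degree $m-1$) has exactly these as its zeros, so they are simple and $p$ is strictly monotonic on each of the $m$ subintervals $[\zeta_{j-1},\zeta_j]$, where $\zeta_0:=-1$ and $\zeta_m:=1$. Since $p$ alternates between $+1$ and $-1$ at consecutive $\zeta_j$'s, its range on $[-1,1]$ is exactly $[-1,1]$.

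Next, consider $1-p(s)^2$, a polynomial of degree $2m$. Each interior $\zeta_j$ is at least a double zero (because $p(\zeta_j)=\pm 1$ and $p'(\zeta_j)=0$ together force $(1-p^2)'(\zeta_j)=0$), while $\pm 1$ are simple zeros (since $p'(\pm 1)\neq 0$, as all zeros of $p'$ lie in the open interval). The total multiplicity matches the degree, so comparing leading coefficients gives
\begin{equation*}
1-p(s)^2 = a^2(1-s^2)Q(s)^2,\qquad Q(s):=\prod_{j=1}^{m-1}(s-\zeta_j),
\end{equation*}
where $a$ is the leading coefficient of $p$. Meanwhile $p'$ has degree $m-1$, vanishes at each $\zeta_j$, and has leading coefficient $ma$, so $p'=ma\,Q$. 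Substituting this into the previous display yields the key first-order identity
\begin{equation*}
(1-s^2)(p'(s))^2 = m^2\bigl(1-p(s)^2\bigr).
\end{equation*}

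Finally, with $s=\cos\theta$ and $y(\theta):=p(\cos\theta)$, this identity becomes $(y'(\theta))^2=m^2(1-y(\theta)^2)$. Differentiating in $\theta$ and cancelling $y'$ (extended by continuity across its finitely many zeros) gives $y''+m^2 y=0$, whose general solution is $y(\theta)=A\cos(m\theta)+B\sin(m\theta)$. Since $y$ is automatically even in $\theta$, we must have $B=0$, so $p(s)=A\,T_m(s)$; and $p(1)=1$ forces $A=1$. I expect the main obstacle to be establishing the factorization of $1-p^2$ together with the parallel identity $p'=ma\,Q$, i.e., squeezing the equi-oscillation hypothesis into the first-order ODE; once this algebraic structure is in hand, the reduction to the oscillator equation and the symmetry argument are routine.
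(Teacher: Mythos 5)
Your proof is correct, and it takes a genuinely different route from the one in the paper. The paper argues by comparison: if $p$ and $q$ both equi-oscillate, it scales $q$ to match leading coefficients, shows the degree-$(m-1)$ difference $r = p - (a_p/a_q)q$ alternates in sign (non-strictly) at the $m+1$ points $\zeta_0,\dots,\zeta_m$, and invokes an inductive claim to force $r\equiv 0$; since $T_m$ equi-oscillates, $p = T_m$. This is the standard approximation-theoretic argument that underlies the Chebyshev extremal/alternation theorem. Your proof instead squeezes the equi-oscillation data into the algebraic identity $(1-s^2)(p')^2 = m^2(1-p^2)$ via degree counting in the factorization of $1-p^2$ together with $p' = ma\,Q$, and then transfers to the harmonic oscillator under $s = \cos\theta$. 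This is a direct ODE characterization (closely tied to the Pell-type identity $T_m^2 - (s^2-1)U_{m-1}^2 = 1$ and the Chebyshev differential equation recorded as \eqref{cpdiffeq}); it identifies $T_m$ outright rather than reducing uniqueness to a sign-alternation lemma. Both arguments are sound; yours avoids the auxiliary inductive claim and is more self-contained, while the paper's comparison technique generalizes more readily to weighted or constrained extremal problems. One small point of hygiene: the zero set of $y'(\theta) = -p'(\cos\theta)\sin\theta$ is discrete but not finite on all of $\R$ (finite only per period); nothing breaks, since $y''+m^2 y$ is continuous and vanishes on a dense set, hence identically, but the phrasing ``finitely many zeros'' should be ``a discrete set''.
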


\begin{proof}
The proof follows ideas used in \cite{borerd95} to establish that, up to a constant, the $T_m$ are the unique monic polynomials with minimal $L^\infty$ norm.

Let $p(s)=a_p s^m+\dots$ and $q(s)=a_q s^m+\dots$ be two polynomials with the equi-oscillation property. W.l.o.g. assume $a_q\geq a_p>0$. Let $\zeta_1 <\dots<\zeta_{m-1}$ be the critical points of $p$ in $[-1,1]$ and set $\zeta_0=-1$, $\zeta_m=1$.  

Consider the  polynomial $r(s)=p(s)-\frac{a_p}{a_q}q(s)$ of degree $(m-1)$. Then $r(\zeta_{m-j})\geq 0$ for all even $j$, and $r(\zeta_{m-j})\leq 0$ for all odd $j$.  The proof that $r\equiv 0$ follows from the following more general statement:

{\textsc{Claim:}} {\em If $t_0<t_1<\dots <t_m\in \mathbb R$ and a polynomial $\rho$ of degree $m-1$ satisfies $(-1)^j\rho(\zeta_{j})\geq 0$ for all $j$, then $\rho\equiv 0$.}

We conclude that $r=p-\frac{a_p}{a_q} q\equiv 0$ by applying the claim to $\rho=(-1)^m r$. Since $p(1)=q(1)=1$ implies that $a_p=a_q$, we see that $p\equiv q$.
\end{proof}

{\em Proof of \textsc{Claim:}}
The proof proceeds by induction in $m$. In the case $m=1$, $\rho(t_0)\geq 0$ and $\rho(t_1)\leq 0$ implies that $r\equiv 0$. For the induction step, assume that the claim holds true for $m$. Given a polynomial $\rho$ of degree $m$ with the property, it must have a zero $z$ with $t_{m}\leq z \leq t_{m+1}$. Define $\tilde{\rho}(x)=\frac{r(x)}{z-x}$. Note that $\tilde{\rho}$ is a polynomial of degree $m-1$. If $z>t_m$, then $\tilde\rho(t_j)(-1)^j \geq 0$ for $0\leq j\leq m$ and hence $\tilde\rho\equiv 0$ by the induction hypothesis. If $z=t_m$ then $\tilde\rho(t_j)(-1)^j \geq 0$ for $0\leq j\leq m-1$, but clearly one also has $\tilde\rho(t_{m+1})(-1)^m \geq 0$. Again by the induction hypothesis $\tilde\rho\equiv 0$.
\hfill$\Box$

The following lemma plays a useful role in solving the relaxed minimization problem.
%%%%%%%%%%%%%%%%
\begin{lemma}\label{sinprod}
 Let $z_j$, $j=1,\dots, m-1$, be the critical points of the Chebyshev polynomial of the first kind $T_m$, as above, and set $z_0\equiv -1$. 
Then
\begin{equation}
\prod\limits_{\substack{i=0\\i\neq k}}^{m-1}(z_k-z_i)=
\begin{cases}
  \frac{m(-1)^{m-1}}{2^{m-1}}\ \ \ \ \ \ \ \ \text{for $k=0$}\label{sinpr0}\\
  \\
  \frac{m(-1)^{m-1-k}}{2^{m-1}(1-z_k)}\ \ \ \ \ \text{for $k>0$}
\end{cases}
\end{equation}
\end{lemma}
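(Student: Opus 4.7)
The key observation is that the critical points $z_1,\dots,z_{m-1}$ of $T_m$ are exactly the zeros of $U_{m-1}$, since $T_m'=mU_{m-1}$. Combined with the fact that $U_{m-1}$ has leading coefficient $2^{m-1}$, we can write
\begin{equation}
U_{m-1}(x) = 2^{m-1}\prod_{j=1}^{m-1}(x-z_j).
\end{equation}
This identity is the engine of the proof; both cases of the lemma will be read off from it.

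For the case $k>0$, I will split off the factor $(z_k-z_0)=(z_k+1)$ and note that
\begin{equation}
\prod_{i=1,\,i\neq k}^{m-1}(z_k-z_i) = \frac{U_{m-1}'(z_k)}{2^{m-1}},
\end{equation}
which follows by differentiating the product formula and evaluating at $z_k$. To compute $U_{m-1}'(z_k)$ explicitly, I would use the trigonometric representation $U_{m-1}(\cos\theta)=\sin(m\theta)/\sin\theta$ together with $z_k=\cos\theta_k$ where $\theta_k=(m-k)\pi/m$. Differentiating via the chain rule and using $\sin(m\theta_k)=0$ and $\cos(m\theta_k)=(-1)^{m-k}$ yields
\begin{equation}
U_{m-1}'(z_k) = \frac{m(-1)^{m-k-1}}{\sin^2\theta_k} = \frac{m(-1)^{m-k-1}}{(1-z_k)(1+z_k)}.
\end{equation}
Multiplying by $\tfrac{1}{2^{m-1}}$ and by the split-off factor $(z_k+1)$ cancels the $(1+z_k)$ and gives the claimed formula.

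For the case $k=0$, I will evaluate the product formula for $U_{m-1}$ at $x=-1$:
\begin{equation}
\prod_{i=1}^{m-1}(-1-z_i) = \frac{U_{m-1}(-1)}{2^{m-1}}.
\end{equation}
The value $U_{m-1}(-1)$ comes from taking the limit $\theta\to\pi$ in $\sin(m\theta)/\sin\theta$ (an indeterminate form), which via L'H\^opital evaluates to $m(-1)^{m-1}$. This yields exactly the first case of the lemma.

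The main obstacle, if any, is bookkeeping of signs: tracking $(-1)^{m-k}$ versus $(-1)^{m-k-1}$ through the chain-rule computation, and making sure the split of the $i=0$ factor lines up correctly with the reduced product over $i\geq 1$. Everything else reduces to direct substitution into identities already listed in the appendix.
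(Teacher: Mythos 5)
Your proof is correct. For $k=0$ you take the same route as the paper: evaluate $\prod_{i\geq 1}(z_0-z_i)$ as $U_{m-1}(-1)/2^{m-1}$ and compute the trigonometric limit. For $k>0$, however, your argument is genuinely different. The paper identifies $\prod_{i\neq k, i\geq 1}(z_k-z_i)$ with $T_m''(z_k)/(m2^{m-1})$ and then extracts the value of $T_m''(z_k)$ from the Chebyshev differential equation $(1-x^2)T_m'' - xT_m' + m^2 T_m = 0$, using that $T_m'(z_k)=0$ and $T_m(z_k)=(-1)^{m-k}$. You instead compute $U_{m-1}'(z_k)$ directly by the chain rule applied to $U_{m-1}(\cos\theta)=\sin(m\theta)/\sin\theta$, and the factor $(1+z_k)=\sin^2\theta_k/(1-z_k)$ falls out of the $\sin^2\theta_k$ in the denominator. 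Your version is more elementary and self-contained --- it needs only the trigonometric definition of $U_{m-1}$ and not the second-order ODE --- whereas the paper's use of the ODE is a clean shortcut once that identity is on the table, since it converts the derivative computation into algebra. Both yield the same sign bookkeeping and final answer, and the cancellation of $(1+z_k)$ you flag does indeed line up as claimed.
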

%%%%%%%%%%%%%%%%%%%
\begin{proof}Recall that  $T_m$  has leading coefficient $2^{m-1}$. We obtain 
\begin{equation}
 T_m'(z) = m 2^{m-1} \prod\limits_{i=1}^{m-1}(z-z_i), \label{Tmprz0}
\end{equation}
and 
\begin{equation}
 T_m''(z) = m 2^{m-1} \sum_{j=1}^{m-1}\prod\limits_{\substack{i=1\\i\neq j}}^{m-1}(z-z_i),
\end{equation}
and hence for $1\leq k\leq m-1$
\begin{eqnarray}
 T_m''(z_k) &=& m 2^{m-1} \prod\limits_{\substack{i=1\\i\neq k}}^{m-1}(z_k-z_i)\\
&=& \frac{m 2^{m-1}}{1+z_k} \prod\limits_{\substack{i=0\\i\neq k}}^{m-1}(z_k-z_i).\label{Tmprprzk}
\end{eqnarray}
Thus for $1\leq k\leq m-1$ one has $T_m'(z_k)=0$ and $T_m(z_k)=(-1)^{m-k}$, and so (\ref{cpdiffeq}) reads
\begin{equation}
 (1-z_k^2)  \frac{m 2^{m-1}}{1+z_k} \prod\limits_{\substack{i=0\\i\neq k}}^{m-1}(z_k-z_i)+ m^2 (-1)^{m-k} = 0,
\end{equation}
or
\begin{equation}
 \prod\limits_{\substack{i=0\\i\neq k}}^{m-1}(z_k-z_i)=\frac{(-1)^{m-k-1}m}{2^{m-1}(1-z_k)}.
\end{equation}

On the other hand, as $z_0=\cos(\pi)$, we have using (\ref{chebydef})
\begin{equation}
\prod\limits_{i=1}^{m-1}(z_0-z_i)= \frac{1}{m2^{m-1}} T_m'(z_0) = \frac{1}{ 2^{m-1}} U_m(z_0) = \frac{1}{ 2^{m-1}}\lim\limits_{\theta\rightarrow\pi}\frac{ \sin (m\theta)}{\sin \theta} =\frac{(-1)^{m-1}m}{2^{m-1}}.
\end{equation}
\end{proof}

\frenchspacing

\bibliographystyle{plain}
\bibliography{dgk}

\end{document}